\theoremstyle{plain}
\newtheorem{thm}{Theorem}[section]
\newtheorem{lemma}[thm]{Lemma}
\newtheorem{prop}[thm]{Proposition}
\newtheorem{claim}{Claim}
\theoremstyle{definition}
\newtheorem{defn}[thm]{Definition}
\theoremstyle{remark}
\newtheorem*{rmk}{Remark}
\newcommand{\F}{\mathbb{F}}
\newcommand{\DD}{\mathcal{D}}
\newcommand{\eps}{\varepsilon}
\newcommand{\spa}{\mathrm{span}}
\newcommand{\pns}[1]{\left( #1 \right)}
\newcommand{\brk}[1]{\left[ #1 \right]}
\newcommand{\abs}[1]{\left| #1 \right|}
\newcommand{\supp}{\mathrm{supp}}
\newcommand{\poly}{\mathsf{poly}}
\renewcommand{\Pr}{\mathbf{Pr}}
\newcommand{\Prop}{\mathop{\Pr}}
\newcommand{\E}{\mathbf{E}}
\newcommand{\rank}{\mathsf{rank}}
\newcommand{\qbinom}[2]{\begin{bmatrix} {#1} \\ {#2} \end{bmatrix}_q}
\newcommand{\qbinomsmall}[2]{\brk{\begin{smallmatrix}{#1} \\ {#2}\end{smallmatrix}}_q}
\begin{document}
\title{On the List-Decodability of Random Linear Rank-Metric Codes\thanks{Research supported in part by NSF grant CCF-1422045 and NSERC grant CGSD2-502898. Some of this work was done when the first author was visiting the School of Physical and Mathematical Sciences, Nanyang Technological University, Singapore.}}
\author{Venkatesan Guruswami \and Nicolas Resch}
\date{Computer Science Department \\ Carnegie Mellon University \\ {\tt \{venkatg,nresch\}@cs.cmu.edu}}

\maketitle
\thispagestyle{empty}
\begin{abstract}
	The list-decodability of random linear rank-metric codes is shown to match that of random rank-metric codes. Specifically, an $\F_q$-linear rank-metric code over $\F_q^{m \times n}$ of rate $R = (1-\rho)(1-\frac{n}{m}\rho)-\eps$ is shown to be (with high probability) list-decodable up to fractional radius $\rho \in (0,1)$ with lists of size at most $\frac{C_{\rho,q}}{\eps}$, where $C_{\rho,q}$ is a constant depending only on $\rho$ and $q$. This matches the bound for random rank-metric codes (up to constant factors). The proof adapts the approach of Guruswami, H\aa stad, Kopparty (STOC 2010), who established a similar result for the Hamming metric case, to the rank-metric setting.
\end{abstract}

\section{Introduction}
	At its core, coding theory studies how many elements of a (finite) vector space one can pack subject to the constraint that no two elements are too close. Typically, the notion of closeness is that of Hamming distance, that is, the distance between two vectors is the number of coordinates on which they differ.  In a \emph{rank-metric} code, introduced in~\cite{delsarte78}, codewords are matrices over a finite field and the distance between codewords is the rank of their difference. A \emph{linear} rank-metric code is a subspace of matrices (over the field to which the matrix entries belong) such that every non-zero matrix in the subspace has large rank.
	
	Rank-metric codes  have found applications in magnetic recording~\cite{roth91}, public-key cryptography~\cite{GPT91,Loidreau10,Loidreau17}, and space-time coding~\cite{LGB03,LK05}. There has been a resurgence of interest in this topic due to the utility of rank-metric codes and the closely related subspace codes for error-control in random network coding~\cite{KK08,SKK08}. Decoding algorithms for rank-metric codes also have connections to the popular topic of low-rank recovery, specifically in a formulation where the task is to recover a matrix $H$ from few inner products $\langle H, M \rangle$ with measurement matrices $M$~\cite{FS12}. Finally, the study of rank-metric codes raises additional mathematical and algorithmic challenges not manifested in the Hamming metric (note that the Hamming metric case corresponds to rank-metric codes restricted to contain diagonal matrices).
	
	The notion of list-decoding, introduced independently by Elias and Wozencraft~\cite{wozencraft1958list, elias1957list}, gives the possibility of decoding past half the minimum distance of the code at the cost of returning a (hopefully small) list of candidate codewords. The goal is to determine the optimal trade-offs between the information rate, the decoding radius, and the list size. List-decoding has proved to be a highly fruitful avenue of study in the Hamming metric case, and recently there has also been a great deal of interest in the list-decodability of rank-metric codes. This work concerns the list-decodability of \emph{random linear} rank-metric codes, and establishes a trade-off between list-size and gap to optimal decoding radius that is similar to what is known (and is straightforward to establish) for completely random rank-metric codes. Almost all known constructions of rank-metric codes are linear, and random code ensembles achieve the best known trade-offs, so it is of interest to understand the performance of random linear (rank-metric) codes. The linear dependencies between sets of codewords makes such a claim non-trivial to establish in the case of linear codes. Our work is most similar to \cite{GHK11} which established a similar result for random linear codes in the Hamming metric case; we follow their overall proof strategy and adapt it to the rank-metric case.
	
	\subsection{Prior Results} \label{sec:prior_results} We now provide a summary of some previous results, before stating our result formally.
	
	\paragraph{List-decoding Gabidulin codes.} Gabidulin codes~\cite{gabidulin1985theory} provide the natural generalization of Reed-Solomon codes to the rank-metric case and have been extensively studied.
	The problem of unique decoding Gabidulin codes up to half-the-minimum-distance has been solved several times, by adapting the different approaches for unique decoding Reed-Solomon codes to the linearized setting, starting with Gabidulin's original paper, and later in \cite{roth91,loidreau06welch,KK08} among other places. Despite much effort, however, the list decoding algorithms for Reed-Solomon codes such as \cite{sudan1997decoding,GS99} haven't been generalized to Gabidulin codes. There are now results which partially explain this difficulty. 
	
	Wachter-Zeh~\cite{wachter2012bounds} has shown that there are Gabidulin codes of rate $R$ cannot be list-decoded beyond the Johnson radius $1-\sqrt{R}$, in the sense that there may be super-polynomially many Gabidulin codewords just beyond this distance from some matrix. More recently, Raviv and Wachter-Zeh~\cite{raviv2016some} (see also the correction in~\cite{raviv2017correction}) have shown that certain Gabidulin codes cannot be (combinatorially) list-decoded even slightly beyond half the minimum distance.
	
	Nonetheless, certain variants of Gabidulin codes can be list-decoded well beyond half the minimum distance. Guruswami, Wang and Xing~\cite{GWX16} (see also~\cite{GW-random14,GX-stoc13}) proved that certain explicitly constructible subcodes of the Gabidulin code of constant rate $R$ can be list-decoded up to radius $1-R-\eps$, matching the Singleton bound for rank-metric codes. These works also extend to subspace codes, a basis-independent version of rank-metric codes proposed in \cite{KK08} for error control in network coding, which spurred some of the recent interest in rank-metric codes.
	
	\paragraph{List-decoding random rank-metric codes.} The study of the list-decodability of random rank-metric codes was initiated by Ding~\cite{ding2015list}. First, she shows that a uniformly random rank-metric code in $\F_q^{m \times n}$ of rate $R$ can (with high probability) be list-decoded up to radius $1-R-\eps$ with lists of size $O(1/\eps)$, assuming $n/m \leq \eps$. Moreover, the requirement on $n/m$ is not superfluous, as if $n/m \geq \frac{2\eps}{(1-R-\eps)(R+\eps)} = \Theta_R(\eps)$, then the code cannot be list-decodable with polynomially bounded lists.
	
	For random $\F_q$-linear\footnote{Many rank-metric codes are actually $\F_{q^m}$-linear. This is done by viewing the columns of a matrix as elements of the extension field $\F_{q^m}$, so matrices $X \in \F_q^{m \times n}$ correspond to vectors $\mathbf{x} \in \F_{q^m}^n$. So a code $\mathcal{C} \subseteq \F_{q^m}^n$ is $\F_{q^m}$-linear if it is closed under multiplication by scalars from $\F_{q^m}$. However, in this paper, we focus upon $\F_q$-linear codes, so a linear code will refer to a $\F_q$-linear code.} codes, Ding shows that for any desired radius $\rho \in (0,1)$, if $R = (1-\rho)(1-\tfrac{n}{m}\rho)-\eps$, then a random linear code of rate $R$ is with high probability list-decodable with list size $\exp(O(1/\eps))$. On the negative side, if $R \geq (1-\rho)(1-\tfrac{n}{m}\rho)$, then it is shown that there are no $\F_q$-linear codes that are list-decodable up to radius $\rho$ with small lists.

 \paragraph{List-decoding random linear codes in the Hamming metric.} 
The problem of determining the list-decodability of random linear
codes in the Hamming metric remains an active area of research. As
this paper focuses upon rank-metric codes, we will not provide a
complete survey of results. However, we would like to highlight the
result of Guruswami, H{\aa}stad and Kopparty in~\cite{GHK11}, as our
approach is largely inspired by this work. The authors show that, for
any $\rho \in (0,1-\tfrac{1}{q})$, a random linear code of rate
$1-H_q(\rho)-\eps$ is list decodable up to radius $\rho$ with lists of
size $C_{q,\rho}/\eps$, for some finite constant $C_{q,\rho}$
depending only on $q$ and $\rho$. The dependence of $C_{q,\rho}$,
however, degrades badly as the error fraction $\rho$ approaches the
maximum possible value of $1-1/q$. Follow-up
works~\cite{cheraghchi2013restricted, wootters2013list,
  rudra2014every} have addressed this issue, obtaining optimal bounds
also in the high-error regime (using very different techniques).

	\subsection{Our Results} \label{sec:our_results}
	Our main result shows that random linear codes have list sizes that grow linearly with the reciprocal of the distance to capacity. 
	
	\begin{thm} \label{thm:main}
		Let $\rho \in (0,1)$ and $n\leq m$. Then, with high probability, an $\F_q$-linear rank-metric code in $\F_q^{m \times n}$ of rate $R = (1-\rho)(1-\tfrac{n}{m}\rho)-\eps$ is list-decodable up to radius $\rho$ with lists of size $O_{\rho,q}(1/\eps)$.
	\end{thm}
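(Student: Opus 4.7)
The plan is to follow the approach of Guruswami, H\aa stad, and Kopparty~\cite{GHK11}, adapted to the rank-metric setting. Suppose for contradiction $C$ fails to be $(\rho, L)$-list-decodable: there exist $Y \in \F_q^{m \times n}$ and $L + 1$ distinct codewords $M_0, M_1, \ldots, M_L \in C$ with $\rank(M_i - Y) \leq \rho n$. By the $\F_q$-linearity of $C$ we may translate so that $M_0 = 0$, whence $\rank(Y) \leq \rho n$ and $M_1, \ldots, M_L$ are $L$ distinct nonzero elements of $C$ lying in the rank-ball $B(Y, \rho)$. Their $\F_q$-span $V := \spa(M_1, \ldots, M_L) \subseteq C$ has dimension $\ell$ with $\lceil \log_q(L+1) \rceil \leq \ell \leq L$, the lower bound coming from the fact that $V$ must contain the $L + 1$ distinct elements $0, M_1, \ldots, M_L$.

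For each $\ell$ in this range, I would bound $\Pr[\mathcal{E}_\ell]$, the probability that this event occurs with span of dimension $\ell$, via union bound:
\[
\Pr[\mathcal{E}_\ell] \leq \sum_{V,\, Y} \Pr[V \subseteq C]\, \mathbf{1}[|V \cap B(Y, \rho)| \geq L + 1],
\]
using that a fixed $\ell$-dimensional subspace is contained in a random rate-$R$ $\F_q$-linear code with probability $\approx q^{-\ell(1-R) m n}$. Writing $A := (1-\rho)(1 - \tfrac{n}{m}\rho)$, so that $|B(0, \rho)| \leq q^{mn(1-A)}$ and $1 - R = (1-A) + \eps$, the ``linearly independent'' case $\ell = L$ is handled directly by summing over all ordered $L$-tuples of distinct elements in $B(Y, \rho)$:
\[
\Pr[\mathcal{E}_L] \leq q^{mn} \cdot |B(0,\rho)|^L \cdot q^{-L(1-R) m n} = q^{mn (1 - L \eps)},
\]
which is $o(1)$ as $mn \to \infty$ for $L > 1/\eps$, consistent with the target list size $L = O_{\rho, q}(1/\eps)$.

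The main technical obstacle is the ``dependent'' regime $\lceil \log_q(L+1) \rceil \leq \ell < L$: a naive count of $\ell$-dimensional subspaces of $\F_q^{m \times n}$, roughly $q^{\ell(mn - \ell)}$, overwhelms the probability factor $q^{-\ell(1-R) m n}$, so one cannot afford to union bound over arbitrary $\ell$-dimensional $V$. Following GHK, one must invoke a structural lemma: an $\ell$-dimensional $V$ with $\geq L + 1$ elements of rank $\leq \rho n$ in some coset $V + Z$ is highly restricted. In the Hamming case, GHK show that such a subspace is supported on a small set of coordinates. The rank-metric analogue I expect is that the row-spans (and column-spans) of elements of $V$ are jointly contained in a common low-dimensional subspace of $\F_q^n$ (respectively $\F_q^m$) --- that is, $V$ sits inside a compression subspace $\{M \in \F_q^{m \times n} : \colspan(M) \subseteq U\}$ for some $U \subseteq \F_q^m$ of bounded dimension (or the analogous row-space restriction). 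Parametrizing the bad $V$'s through the small ambient subspace $U$ together with a subspace of the corresponding compression space yields a much smaller count, which when multiplied by $q^{-\ell(1-R) m n}$ should give $q^{-\Omega(mn)}$.

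The hardest part will be calibrating this structural lemma: rank does not decompose coordinate-wise, so the Hamming ``support'' argument has no direct counterpart, and one must appeal to the structure of bounded-rank matrix spaces (in the spirit of Atkinson-type theorems bounding the dimension of matrix subspaces with a prescribed rank ceiling), tracking the interaction between the ambient ``large'' parameters $m, n$ and the $q$- and $\rho$-dependent volume constants. Once that is in place, summing the bounds $\Pr[\mathcal{E}_\ell]$ over $\ell \in \{\lceil \log_q(L+1) \rceil, \ldots, L\}$ and choosing $L = C_{\rho, q} / \eps$ for a suitable constant $C_{\rho, q}$ depending only on $\rho$ and $q$ completes the proof.
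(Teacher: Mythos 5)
Your high-level framework --- translate by a codeword, stratify by the dimension $\ell$ of the span of the bad list, union bound over centers and over spanning tuples, and observe that the regime of large $\ell$ is handled by the trivial count $|B_R(0,\rho)|^\ell \cdot q^{-\ell(1-R)mn}$ --- matches the paper. But the core of the proof, the regime $\log_q L \leq \ell \ll 1/\eps$, is exactly the part you leave open, and the substitute you sketch for it does not work. First, your description of GHK is not accurate: they do \emph{not} prove that a subspace with many low-weight points in a coset is supported on few coordinates. Their key lemma is probabilistic, not structural: if $\ell$ vectors are drawn uniformly from the Hamming ball, then with probability $1-q^{-\Omega(n)}$ their span contains at most $C\ell$ points of any ball of the same radius. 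The paper proves the rank-metric analogue (\cref{lem:span}) and uses it to bound the \emph{number} of bad spanning $\ell$-tuples by $|B_R(0,\rho)|^\ell q^{-3nm}$; the extra factor $q^{-3nm}$ is what beats the union bound for small $\ell$. The engine behind that lemma is \cref{lem:two_matrices} --- for $X_1,X_2$ uniform in $B_R(0,\rho)$ and any fixed $Y$, the event $X_1+X_2\in B_R(Y,\rho)$ has probability $q^{-\Omega_\rho(nm)}$ --- bootstrapped to $\ell$ matrices via the $q$-ary Sauer--Shelah lemma and $2$-increasing sequences. None of this machinery appears in your outline.

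Second, the structural lemma you propose in its place is almost certainly false in the form you need. Atkinson-type theorems (and their compression-space dichotomies) constrain subspaces in which \emph{all} (or all but a vanishing fraction of) elements have bounded rank. Here the hypothesis is only that some coset of $V$ contains $L+1 = O(1/\eps)$ elements of rank at most $\rho n$, out of $q^\ell$ elements of the coset --- barely more than the $(q-1)\ell$ scalar multiples of the spanning matrices that lie in $B_R(0,\rho)$ automatically when $Y=0$. A generic $\ell$-dimensional space spanned by rank-$\rho n$ matrices satisfies this weak condition without being anywhere near a compression space, so no useful parametrization of the bad $V$'s by a small ambient column space $U$ is available. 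The correct statement is that bad spanning tuples are \emph{rare} (a $q^{-\Theta(nm)}$ fraction of all tuples from the ball), not that bad subspaces are \emph{structured}; establishing that rarity is the main technical content of the paper and is missing from your proposal.
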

	
	Note that we cannot hope for a larger rate by the results in~\cite{ding2015list}. Moreover, a simple argument shows that this matches the list size which is achieved by a uniformly random code of this rate; we provide this argument in~\cref{sec:appendix}.
	
	\subsection{Organization}
	
	In~\cref{sec:prelims}, we set notation and state certain facts which we will apply. In~\cref{sec:overview} we provide the intuition for our approach before formally proving our main results in~\cref{sec:proofs}. We conclude with some open problems in~\cref{sec:conclusion}. 

\section{Preliminaries} \label{sec:prelims}

	\paragraph{Notation.} We use standard Landau notation, i.e., $O(\cdot)$, $\Omega(\cdot)$, $o(\cdot)$ and $\omega(\cdot)$. A subscript indicates that the implied constant depends on the parameter in the subscript; for example, $f(x) = O_y(g(x))$ asserts that there exists of constant $C_y$ depending on $y$ (but not $x$) such that $f(x) \leq C_yg(x)$ for all sufficiently large/small $x$.
	
	Throughout, $q$ denotes a prime power. Where convenient, we use the notation $\exp_q(\cdot) = q^{(\cdot)}$. Denote by $\F_q$ the finite field with $q$ elements, and $\F_q^{m \times n}$ the set of all $m \times n$ matrices with entries in $\F_q$, which naturally has the structure of an $\F_q$-vector space. Assume without loss of generality that $m \geq n$ (if this is not the case, consider the transpose of the matrices) and put $b = \frac{n}{m}$. For a matrix $X \in \F_q^{m \times n}$ denote its rank by $\rank(X)$. For $X,Y \in \F_q^{m \times n}$, the \emph{(normalized) rank distance} between $X$ and $Y$ is 
	\[
		d_R(X,Y) := \frac{1}{n}\rank(X-Y) \enspace .
	\]
	Observe that this indeed defines a metric (the triangle inequality is a consequence of the sub-additivity of rank). A \emph{(rank-metric) code} is then just a subset $\mathcal{C} \subseteq \F_q^{m \times n}$. If the set $\mathcal{C}$ is a subspace, it is called a \emph{linear code}. The \emph{rate} of $\mathcal{C}$ is the ratio $R := \frac{\log_q|\mathcal{C}|}{mn}$ and the \emph{minimum distance} is $d_R(\mathcal{C}) := \min\{\rank(X-Y):X,Y \in \mathcal{C}, X \neq Y\}$. Note that if $\mathcal{C}$ is linear, then $R = \frac{1}{mn}\dim_{\F_q}\mathcal{C}$ and $d_R(\mathcal{C}) = \min\{\rank(X):X \in \mathcal{C}\setminus\{0\}\}$. 
	
	As with classical codes over the Hamming metric, we have the following Singleton bound.
	
	\begin{lemma} [Singleton Bound \cite{gabidulin1985theory}]
		If $\mathcal{C} \subseteq \F_q^{m \times n}$ is a rank-metric code with minimum distance $d$, then 
		\[
			\log_q|C| \leq m(n-d+1) \enspace .
		\]
	\end{lemma}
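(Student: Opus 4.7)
The plan is to prove the Singleton bound by a puncturing/projection argument, exactly analogous to the Hamming metric case but using columns instead of coordinates. The intuition is that a matrix of rank at least $d$ must have at least $d$ nonzero columns, so deleting $d-1$ columns cannot collapse two distinct codewords onto the same image.

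Concretely, I would define the projection map $\pi : \F_q^{m \times n} \to \F_q^{m \times (n-d+1)}$ that sends a matrix $X$ to the submatrix consisting of its first $n-d+1$ columns (any fixed choice of $n-d+1$ columns works). The first step is to show that $\pi$ restricted to $\mathcal{C}$ is injective. Suppose $X, Y \in \mathcal{C}$ are distinct with $\pi(X) = \pi(Y)$. Then $X - Y$ is a nonzero matrix whose first $n-d+1$ columns all vanish, so the nonzero columns of $X-Y$ are contained in the remaining $d-1$ columns. Since the rank of a matrix is at most the number of its nonzero columns, this forces $\rank(X-Y) \leq d-1$, contradicting the assumption that $\mathcal{C}$ has minimum distance $d$.

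Given injectivity, the second step is immediate:
\[
|\mathcal{C}| \leq |\pi(\mathcal{C})| \leq |\F_q^{m \times (n-d+1)}| = q^{m(n-d+1)},
\]
and taking $\log_q$ of both sides yields $\log_q |\mathcal{C}| \leq m(n-d+1)$.

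There is no real obstacle here; the only point requiring mild care is the bound $\rank(M) \leq \#\{\text{nonzero columns of } M\}$, which is just the observation that the column space is spanned by the nonzero columns. The proof works without any linearity assumption on $\mathcal{C}$, so the same bound applies to arbitrary (not necessarily linear) rank-metric codes.
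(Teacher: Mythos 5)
Your proof is correct: the puncturing argument (project onto $n-d+1$ columns, observe that injectivity follows because a nonzero difference supported on $d-1$ columns has rank at most $d-1$) is the standard proof of the rank-metric Singleton bound, and every step checks out. The paper states this lemma with a citation to Gabidulin and gives no proof of its own, so there is nothing further to compare against.
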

	
	A \emph{random code} of rate $R$ is a random subset $\mathcal{C} \subseteq \F_q^{m \times n}$ obtained by including each element independently with probability $q^{-(1-R)mn}$ (thus, $\E|\mathcal{C}| = q^{Rnm}$). A \emph{random linear code} of rate $R$ is a random subspace $\mathcal{C} \subseteq \F_q^{m \times n}$ of dimension $Rmn$ (which we assume is an integer). 
	
	\paragraph{Facts about the rank-metric.} As in any metric space, we have the concept of a metric ball:
	\begin{defn} [Rank-Metric Ball]
		For $\rho \in (0,1)$ and $X \in \F_q^{m,n}$, the \emph{rank-metric ball} of radius $\rho$ centered at $X$ is 
		\[
			B_R(X,\rho) = \{Y \in \F_q^{m \times n}:d_R(X,Y) \leq \rho\} \enspace .
		\]
	\end{defn}

	Clearly, the size of a rank-metric ball depends only on its radius (and not its center). We record the following facts:
	
	\begin{lemma} [\cite{gadouleau2008decoder}] \label{lem:facts_about_balls}
		Let $N_q(r,m, n)$ denote the number of matrices in $\F_q^{m \times n}$ of rank $r$. Then
		\[
			N_q(r,m,n) = \prod_{j=0}^{r-1}\frac{(q^n-q^j)(q^m-q^j)}{q^r-q^j} \enspace .
		\]
		(By convention, we take the empty product to have value 1.) Thus, for any $X \in \F_q^{m,n}$ and $\rho \in (0,1)$:
		\[
			|B_R(X,\rho)| = \sum_{r=0}^{\lfloor \rho n \rfloor}\prod_{j=0}^{r-1}\frac{(q^n-q^j)(q^m-q^j)}{q^r-q^j} \enspace .
		\]
		Moreover, we have the estimates
		\[
			q^{mn(\rho + \rho b - \rho^2b)} \leq |B_R(X,\rho)| \leq K_q^{-1}q^{mn(\rho + \rho b - \rho^2b)} \enspace ,
		\]
		where $K_q = \prod_{j=1}^{\infty}(1-q^{-j})$. Since $K_q \in (0,1)$ increases with $q$ and $K_2 \approx 0.2887$, we have (say) $K_q^{-1} < 4$. 
	\end{lemma}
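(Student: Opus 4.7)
The plan is to handle the three assertions in sequence: the exact count $N_q(r,m,n)$, the resulting ball-size formula (immediate), and the asymptotic sandwich.

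For the count, I would bijectively parametrize each rank-$r$ matrix $M \in \F_q^{m \times n}$ by the pair $(V, \phi)$ consisting of its column space $V \subseteq \F_q^m$ (an $r$-dimensional subspace) and the induced surjection $\phi : \F_q^n \to V$, $v \mapsto Mv$. The number of such subspaces $V$ is the Gaussian binomial $\prod_{j=0}^{r-1}(q^m - q^j)/(q^r - q^j)$ (pick an ordered basis in $\prod_j(q^m - q^j)$ ways, then divide by the $\prod_j(q^r - q^j)$ orderings of any fixed basis), and the number of surjections $\F_q^n \to \F_q^r$ is $\prod_{j=0}^{r-1}(q^n - q^j)$ (pick $r$ linearly independent vectors of $\F_q^n$ one at a time to serve as the rows of the matrix representation). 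Multiplying gives the claimed formula for $N_q(r,m,n)$, and $|B_R(X,\rho)| = \sum_{r=0}^{\lfloor \rho n\rfloor} N_q(r,m,n)$ follows immediately from the definition of the ball.

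For the asymptotic estimates, the key move is to factor out the dominant power of $q$ in each factor via $q^a - q^j = q^a(1-q^{j-a})$, yielding
\[
N_q(r,m,n) = q^{r(m+n-r)} \cdot \frac{\prod_{j=0}^{r-1}(1-q^{j-n})(1-q^{j-m})}{\prod_{j=0}^{r-1}(1-q^{j-r})}.
\]
After reindexing, the denominator is $\prod_{k=1}^{r}(1-q^{-k})$ and each of the two numerator products is of the same form over a subrange of $\{1,2,\ldots\}$; since every such finite sub-product of $\prod_{k\ge 1}(1-q^{-k}) = K_q$ lies in $[K_q, 1]$, we obtain $K_q^{2} \, q^{r(m+n-r)} \le N_q(r,m,n) \le K_q^{-1}\, q^{r(m+n-r)}$, with the constants depending only on $q$.

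Finally, because $m \ge n$ the exponent $r(m+n-r)$ is strictly increasing on $[0,n]$, and the ratio of consecutive terms in the ball-size sum is at least $q$. So the sum is within a $\tfrac{q}{q-1}$-factor of its largest term, attained at $r^* := \lfloor \rho n\rfloor$. Evaluating $r^*(m+n-r^*) = \rho n(m+n-\rho n) + O(m) = mn(\rho + \rho b - \rho^2 b) + O(m)$ and folding the $O(m)$ slack and the $K_q^{2}, q/(q-1)$ constants into the quoted $K_q^{-1}$ on the upper side (and dropping them for the weaker lower bound) gives the stated sandwich. The main nuisance is purely bookkeeping these constants cleanly; no substantive obstacle arises, as the argument is essentially a standard $q$-analog ball-volume estimate.
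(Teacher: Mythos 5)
The paper does not actually prove this lemma --- it is quoted from Gadouleau--Yan \cite{gadouleau2008decoder} --- so there is no in-paper argument to compare against. Your proof is the standard one, and its first two parts are correct: the column-space/surjection parametrization gives the exact count $N_q(r,m,n)$, the ball-size formula is immediate, and the factorization $N_q(r,m,n) = q^{r(m+n-r)}\cdot\frac{\prod_{j=0}^{r-1}(1-q^{j-n})(1-q^{j-m})}{\prod_{k=1}^{r}(1-q^{-k})}$ correctly yields the term-wise sandwich $K_q^2\, q^{r(m+n-r)} \le N_q(r,m,n) \le K_q^{-1} q^{r(m+n-r)}$.

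The last step is where you are too casual. First, the assertion that consecutive terms of the sum grow by a factor of at least $q$ does not follow from the stated reason (monotonicity of the exponent $r(m+n-r)$): when $m=n$ the exponent increases by only $1$ per step while the correction factors can shrink, so one must actually compute $\frac{N_q(r+1,m,n)}{N_q(r,m,n)} = q^{m+n-2r-1}\,\frac{(1-q^{r-n})(1-q^{r-m})}{1-q^{-r-1}}$ and check it is $\ge q$ for the relevant range $r \le \lfloor\rho n\rfloor - 1 \le n-2$ (note it can be $<1$ at $r=n-1$, e.g.\ $N_2(2,2,2)=6 < N_2(1,2,2)=9$, so the restriction $\rho<1$ is doing real work here). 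Second, the constants do not come out as stated and cannot be ``folded in'': the geometric-sum step costs a factor $\frac{q}{q-1}$, so your upper bound is $\frac{q}{q-1}K_q^{-1}q^{mn(\rho+\rho b-\rho^2 b)}$, which for $q=2$ exceeds the quoted $K_q^{-1}<4$; and on the lower side you obtain only $K_q^{2}\,q^{r^*(m+n-r^*)}$ with $r^*=\lfloor\rho n\rfloor \le \rho n$, i.e.\ a bound weaker than the stated $q^{mn(\rho+\rho b-\rho^2 b)}$ by a factor $K_q^2 q^{-O(m)}$ --- a factor less than $1$ cannot simply be ``dropped'' from a lower bound. (Indeed the stated lower bound with constant $1$ is false verbatim when $\lfloor\rho n\rfloor$ is much smaller than $\rho n$, e.g.\ $q=2$, $m=n=2$, $\rho=0.4$ gives $|B_R(X,\rho)|=1 < 2^{2.56}$; it must be read asymptotically.) None of this affects the paper downstream --- every application only uses the exponential order and the fact that the multiplicative constant is absolute --- but your write-up should state the weaker constants it actually proves rather than claim the quoted ones.
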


	Next, we recall the definition of the Grassmannian. 
	
	\begin{defn} [Grassmannian]
		For a vector space $V$ over $\F_q$ and an integer $0 \leq k \leq \dim V$, denote by $G(k,V)$ the set of all subspaces $U \subseteq V$ of dimension $k$. If $n = \dim V$, we have 
		\[
			|G(k,V)| = \qbinom{n}{k} = \prod_{j=0}^{k-1}\frac{q^n-q^j}{q^k-q^j} \enspace .
		\]
	\end{defn}

	We record the following estimates for $\qbinomsmall{n}{k}$:

	\begin{lemma} [\cite{gadouleau2008decoder}] \label{lem:q_nomial}
		We have 
		\[
			K_q \cdot q^{k(n-k)} \leq \qbinom{n}{k} \leq K_q^{-1}q^{k(n-k)} \enspace .
		\]		
	\end{lemma}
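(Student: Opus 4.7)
The plan is to manipulate the explicit formula for $\qbinom{n}{k}$ into the form $q^{k(n-k)}$ times a correction factor, and then bound that correction factor above and below by $K_q^{\pm 1}$ using the termwise comparison with the infinite product defining $K_q$.

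First I would factor the maximal power of $q$ out of each factor in the numerator and denominator of the defining product:
\[
\qbinom{n}{k} = \prod_{j=0}^{k-1}\frac{q^n-q^j}{q^k-q^j} = \prod_{j=0}^{k-1}\frac{q^{n-j}(1-q^{-(n-j)})}{q^{k-j}(1-q^{-(k-j)})}.
\]
The powers of $q$ collapse to $q^{kn - k(k-1)/2} / q^{k^2 - k(k-1)/2} = q^{k(n-k)}$. Reindexing the remaining product by $i = k-j$ (so $i$ runs from $1$ to $k$) gives
\[
\qbinom{n}{k} = q^{k(n-k)} \prod_{i=1}^{k} \frac{1 - q^{-(n-k+i)}}{1 - q^{-i}}.
\]

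Now I bound the correction factor. For the upper bound, each numerator factor is at most $1$, while the denominator satisfies
\[
\prod_{i=1}^{k}(1-q^{-i}) \geq \prod_{i=1}^{\infty}(1-q^{-i}) = K_q,
\]
since extending a product of numbers in $(0,1)$ only decreases it. Hence the correction factor is at most $K_q^{-1}$, giving $\qbinom{n}{k} \leq K_q^{-1} q^{k(n-k)}$.

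For the lower bound, I would exploit the hypothesis $n \geq k$ (implicit in the Grassmannian being nonempty): since $n-k+i \geq i$ for all $i \geq 1$, we have $q^{-(n-k+i)} \leq q^{-i}$ and so $1 - q^{-(n-k+i)} \geq 1 - q^{-i}$. Therefore
\[
\prod_{i=1}^{k}(1 - q^{-(n-k+i)}) \geq \prod_{i=1}^{k}(1-q^{-i}) \geq \prod_{i=1}^{\infty}(1-q^{-i}) = K_q,
\]
while the denominator is at most $1$. This yields $\qbinom{n}{k} \geq K_q \cdot q^{k(n-k)}$, completing the proof. There is no real obstacle here—the only subtle point is invoking $n \geq k$ to compare the numerator and denominator products termwise, which is what lets both bounds collapse neatly onto the same constant $K_q$.
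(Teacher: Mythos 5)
Your proof is correct. The paper itself gives no proof of this lemma---it is quoted from the cited reference \cite{gadouleau2008decoder}---and your argument is the standard derivation: extract $q^{k(n-k)}$ from the defining product, then bound the residual product $\prod_{i=1}^{k}\frac{1-q^{-(n-k+i)}}{1-q^{-i}}$ between $K_q$ and $K_q^{-1}$ by termwise comparison with $\prod_{j=1}^{\infty}(1-q^{-j})$, using $n\geq k$ for the lower bound. All steps check out (note only that in your first display each factor $\tfrac{q^{n-j}(1-q^{-(n-j)})}{q^{k-j}(1-q^{-(k-j)})}$ equals the original factor $\tfrac{q^n-q^j}{q^k-q^j}$ only as a ratio, both numerator and denominator having been divided by $q^j$; this does not affect the argument).
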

	
	\paragraph{List-decoding.} We now formally define list-decodability. 
	
	\begin{defn} [List-decodability]
		Let $\rho \in (0,1)$ and $L \geq 1$. A rank-metric code $\mathcal{C} \subseteq \F_q^{m \times n}$ is $(\rho,L)$ \emph{list-decodable} if for all $X \in \F_q^{m \times n}$, 
		\[
			|B_R(X,\rho) \cap \mathcal{C}| \leq L \enspace .
		\]
		If $L = \poly(n,m)$,\footnote{Here, we think of $\rho$ and $q$ as constants.} then we say that $\mathcal{C}$ is \emph{list-decodable}.
	\end{defn}
	
	\begin{rmk}
		One typically distinguishes between the combinatorial property of a code being list-decodable as defined above, vs. the algorithmic task of efficiently computing the list of all codewords near a given point. In this paper, we will only focus upon the combinatorial property of list-decodability.
	\end{rmk}
	
	\paragraph{$c$-increasing sequences.} 
	
	As in~\cite{GHK11}, the notion of a $c$-increasing sequence will be important in our proof. Recall that for $v \in \F_q^\ell$, $\supp(v) = \{i \in [\ell]:v_i \neq 0\}$. 
	
	\begin{defn} [$c$-increasing sequence] \label{def:c_increasing}
		Let $c$ be an integer. A sequence of vectors $v_1,\ldots,v_d \in \F_q^\ell$ is a \emph{$c$-increasing sequence} if for all $j \in [d]$,
		\[
			\abs{\supp(v_j)\setminus \bigcup_{i=1}^{j-1}\supp(v_i)} \geq c \enspace .
		\]
	\end{defn}

	It is shown in~\cite{GHK11} that all sets have a translate containing a large $c$-increasing sequence. A crucial ingredient in their proof was a Ramsey-theoretic lemma proved by Sauer and Shelah~\cite{sauer1972density, shelah1972combinatorial}. (More precisely, the authors use a nonstandard $q$-ary version of the Sauer-Shelah lemma.)

	\begin{lemma} [\cite{GHK11}] \label{lem:sauer-shelah}
		For every prime power $q$, and all positive integers $c,\ell$ and $L \leq q^\ell$, the following holds. For every $S \subseteq \F_q^\ell$ with $|S|=L$, there is a $w \in \F_q^\ell$ such that $S+w$ has a $c$-increasing chain of length at least 
		\[
			\frac{1}{c}\log_q\frac{L}{2} - \pns{1-\tfrac{1}{c}}\log_q((q-1)\ell) \enspace .
		\] 
	\end{lemma}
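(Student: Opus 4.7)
My plan is to argue the contrapositive: if every translate $S + w$ of $S$ has longest $c$-increasing sequence of length at most $d$, then $|S| \leq 2((q-1)\ell)^{c-1} q^{cd}$. Solving the resulting inequality for $d$ in terms of $L = |S|$ recovers the bound in the lemma.

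I would proceed by induction on $d$. The base case $d = 0$ asserts that no translate of $S$ contains any vector of support size at least $c$. Picking any $s_0 \in S$ and translating by $-s_0$ forces every $s \in S$ to lie within Hamming distance $c-1$ of $s_0$. The size of the Hamming ball of radius $c-1$ in $\F_q^\ell$ is $\sum_{i=0}^{c-1}\binom{\ell}{i}(q-1)^i$, and a routine estimate bounds this by $2((q-1)\ell)^{c-1}$, establishing the base case.

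For the inductive step, after translating so that $0 \in S$, the idea is to locate a nonzero $t^* \in S$ of minimum Hamming weight. If that weight is less than $c$ the base-case argument already suffices, so we may assume $|\supp(t^*)| \geq c$. Pick $U \subseteq \supp(t^*)$ of size exactly $c$ and partition $S$ into $q^c$ fibers according to the restriction $s|_U \in \F_q^U$. Each fiber is in bijection with its projection onto $\bar U := [\ell]\setminus U$. The key point is that prepending $t^*$ to a $c$-increasing sequence coming from such a projection produces a $c$-increasing sequence in $S$ of length one greater (its additional supports all lie in $\bar U$ by construction), so each projected fiber has no $c$-increasing chain of length $d$ in the appropriate translate. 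Applying the inductive hypothesis to each fiber and summing over the $q^c$ possible labels yields the desired bound $2((q-1)\ell)^{c-1}q^{cd}$, with the $((q-1)\ell)^{c-1}$ prefactor accounting for the freedom in specifying the anchor $t^*$ (a coordinate plus a nonzero value per step).

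The main obstacle I anticipate is that the $c$-increasing property is not translation-invariant, so applying the inductive bound to \emph{every} translate of a projected fiber --- rather than only to one distinguished translate --- requires careful bookkeeping; a naive projection argument only controls the zero-translate. I expect to address this by strengthening the inductive statement to track not only the set but also a marked coordinate together with a nonzero value, which is precisely where the prefactor $(q-1)\ell$ naturally enters, or alternatively by coordinating the fiber decompositions across all translates of $S$ simultaneously via an averaging step. This translation-invariance issue is the crux of the argument and is what distinguishes it from a direct dimension count.
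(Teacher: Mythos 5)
The paper itself does not prove this lemma --- it is imported verbatim from \cite{GHK11} --- so the only question is whether your argument stands on its own. Your skeleton (pass to the contrapositive, induct on the maximum chain length $d$ over all translates, bound the base case by the volume of a Hamming ball of radius $c-1$, and gain a factor $q^c$ per step by fibering over a $c$-set of coordinates) is the right shape and would reproduce the bound $L\le 2((q-1)\ell)^{c-1}q^{cd}$ if every step went through. But the inductive step has a genuine gap, and it is precisely the one you flag as ``the crux'' without resolving. Concretely: after recursing into a fiber $F_a=\{s\in S: s|_U=a\}$ and obtaining by induction a long $c$-increasing chain in some translate $\pi(F_a)+w'$ of its projection, the only translate of $S$ into which that chain lifts with all supports inside $\bar U$ is $S+w$ with $w|_U=-a$ and $w|_{\bar U}=w'$. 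In that translate your anchor becomes $t^*+w$, whose restriction to $U$ is $t^*|_U-a$; for the anchor to contribute $c$ fresh coordinates this must be fully supported on $U$, i.e.\ $a$ must differ from $t^*|_U$ in \emph{every} coordinate of $U$. The counting forces you to recurse into a \emph{largest} fiber, and nothing prevents its label from agreeing with $t^*|_U$ somewhere, in which case the argument supplies no valid anchor for that lift. Neither proposed repair is carried out, and the first one (paying a factor of $(q-1)\ell$ ``per step'') would turn the additive loss $(1-\tfrac1c)\log_q((q-1)\ell)$ into a loss multiplying the chain length; the resulting chain length would no longer grow with $L/\ell$ uniformly in $\ell$, which is exactly what the application in \cref{lem:span} requires (there $L=C\ell$ and $d$ must exceed $5/\delta$ for all admissible $\ell$ once $C$ is fixed).

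Two smaller errors. First, ``if that weight is less than $c$ the base-case argument already suffices'' is false: the base case needs \emph{every} element of $S-s_0$ to have weight at most $c-1$, and the minimum nonzero weight being small says nothing about the rest of the set. The correct dichotomy is: either $S-s_0$ lies entirely in the Hamming ball of radius $c-1$ (base case), or some element has weight at least $c$, and that element is your $t^*$. Second, \emph{prepending} $t^*$ is wrong even in the unproblematic zero-translate: since $|\supp(t^*)|$ may exceed $c$, its support can spill into $\bar U$ and swallow the ``new'' coordinates of the chain elements that follow it (your parenthetical addresses the supports of the chain elements, not of $t^*$). Appending $t^*$ at the end, where it contributes the $c$ coordinates of $U$ untouched by the chain, is the correct move --- but that leads straight back to the fiber-label compatibility problem above. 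As it stands the proposal is a plausible plan with the central step missing, not a proof.
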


\section{Overview of Approach} \label{sec:overview}

As we show in~\cref{sec:appendix}, uniformly random codes $\mathcal{C}$ of rate $(1-\rho)(1-b\rho)-\eps$ are with high probability $(\rho,O(1/\eps))$ list-decodable. This argument is easily obtained due to the fact that, given any center $Y$ and a list $X_1,\ldots,X_L \in B(Y,\rho)$, the events ``$X_i \in \mathcal{C}$'' are independent. Hence, the probability that each $X_i$ is in the code is small enough to allow us to take a union bound over all possible lists. Unfortunately, in a uniformly linear code, the events ``$X_i \in \mathcal{C}$'' are \emph{not} independent; indeed, the events are not even 3-wise independent (as if $X_i$ and $X_j$ are in the code, then so is $X_i+X_j$). Since a list $\{X_1,\ldots,X_L\}$ is guaranteed to have a linearly independent subset of size $\log L$, one can use the argument for uniformly random codes to conclude that random linear rank-metric codes are $(\rho,O(\exp(1/\eps)))$ list-decodable -- indeed, this is more-or-less the approach followed by Ding~\cite{ding2015list}. Thus, in order to prove that lists of size $O(1/\eps)$ are sufficient, we will need to argue that, given a list contained in a small rank-metric ball which does not contain a large linearly independent set, very few elements of their span will (with high probability) also lie in the rank-metric ball. 

Such an argument is given by Guruswami, H{\aa}stad and Kopparty~\cite{GHK11}. The technical core of their argument is to show that it is exponentially unlikely that $\ell$ vectors selected uniformly at random from the Hamming ball $B_H(0,\rho) := \{x \in \F_q^n :|x|\leq \rho n \}$\footnote{Here, we use the notation $|x| := |\{i:x_i \neq 0\}|$.} have $\omega(\ell)$ elements of their linear span also lying in $B_H(0,\rho)$. That is, they show there exists a constant $C>0$ (which depends on $q$ and $\rho$) such that if $x_1,\ldots,x_\ell$ are sampled independently and uniformly at random from $B_H(0,\rho)$, the probability that $|\spa\{x_1,\ldots,x_\ell\} \cap B_H(0,\rho)| \geq C\ell$ is exponentially small in $n$. We prove an analogous result for matrices with the rank-metric in~\cref{lem:span}.

In order to achieve this, the authors first show that, for any fixed vector $y \in \F_q^n$, if one samples $x_1,x_2 \in B_H(0,\rho)$ independently and uniformly, then it is exponentially unlikely that $x_1+x_2 \in B_H(y,\rho)$. In order to bootstrap this to the case of selecting $\ell$ vectors from $B_H(0,\rho)$, the authors use~\cref{lem:sauer-shelah}.

We prove the appropriate generalization of this fact, concerning the
sum of low-rank random matrices, in~\cref{lem:two_matrices}. This
argument is a bit more involved than in~\cite{GHK11} and represents
the main technical ingredient of our paper. Once we have proved this
lemma, we are able to follow the framework of~\cite{GHK11} to conclude
our main theorem (\cref{thm:main}).

\section{Proofs} \label{sec:proofs}

As alluded to above, we begin by showing that if $X_1,X_2$ are uniformly and independently selected from $B_R(0,\rho)$, it is exponentially unlikely that $X_1+X_2 \in B_R(Y,\rho)$, where $Y$ is any fixed matrix. 

\begin{lemma} \label{lem:two_matrices}
	Let $n \leq m$ be positive integers, $Y \in \F_q^{m \times n}$ a fixed matrix, and $\rho \in (0,1)$. Let $X_1,X_2 \sim \DD_1$ denote the distribution where $X_1$ and $X_2$ are independently and uniformly selected from $B_R(0,\rho)$. Then, assuming $n,m$ are sufficiently large compared to $1-\rho$:
	\[
		\Prop_{X_1,X_2 \sim \DD_1}[X_1+X_2 \in B_R(Y,\rho)] \leq q^{-\Omega_\rho(nm)} \enspace .
	\]
\end{lemma}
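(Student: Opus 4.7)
The probability to bound equals $N(Y)/|B_R(0,\rho)|^2$ where
$$
N(Y) := \abs{\curl{(X_1,X_2,X_3) \in B_R(0,\rho)^3 : X_1 + X_2 + X_3 = Y}}
$$
(using the substitution $X_3 := -(X_1+X_2-Y)$ and $B_R(0,\rho) = -B_R(0,\rho)$), so the goal is $N(Y) \le |B_R(0,\rho)|^2 \cdot q^{-\Omega_\rho(mn)}$. My plan is to bound $N(Y)$ by parametrizing each such triple by its combined row span $V := \mathrm{rowspan}(X_1)+\mathrm{rowspan}(X_2)+\mathrm{rowspan}(X_3) \subseteq \F_q^n$, which must contain $\mathrm{rowspan}(Y)$ (from the sum constraint) and satisfies $\dim V \le 3\rho n$ (from the individual rank bounds).

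For each $V$ of dimension $r_V$ with $\mathrm{rowspan}(Y) \subseteq V$, fix a basis $B_V \in \F_q^{r_V \times n}$ and parametrize: $X_i = L_i B_V$ for a unique $L_i \in \F_q^{m \times r_V}$, with $\rank(L_i) = \rank(X_i) \le \rho n$, and the sum constraint becomes $L_1 + L_2 + L_3 = L_Y$ (where $L_Y B_V = Y$). The count of valid $(L_1, L_2, L_3)$ is handled by fixing $L_1$ and bounding the number of $L_2$ for which both $L_2$ and $L_Y - L_1 - L_2$ have rank at most $\rho n$; this is the intersection of two rank balls in $\F_q^{m \times r_V}$, whose size is controlled by \cref{lem:facts_about_balls} applied in this smaller ambient space and shrinks rapidly as $\rank(L_Y - L_1)$ approaches $2\rho n$. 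Summing over $V$ then uses \cref{lem:q_nomial} to count $\qbinom{n - \rank Y}{r_V - \rank Y} \le K_q^{-1} q^{(r_V - \rank Y)(n-r_V)}$ subspaces of dimension $r_V$ through $\mathrm{rowspan}(Y)$, and comparison with $|B_R(0,\rho)|^2 \ge q^{2\rho n(m + n - \rho n)}$ gives the desired exponential saving in the regime $r_V < n$.

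The main obstacle is twofold. First, a naive summation over all $V \supseteq \mathrm{rowspan}(Y)$ overcounts each triple once per $V$ containing the minimal span $V_{\min} := \sum_i \mathrm{rowspan}(X_i)$, by a factor $\qbinom{n - \dim V_{\min}}{r_V - \dim V_{\min}}$ that can swallow the targeted savings; the cleanest remedy is to restrict to $V = V_{\min}$ by adding a ``full spanning'' constraint on the $L_i$ (equivalent to M\"obius inversion on the subspace lattice), or to concentrate the sum on a single $r_V$ near the dominant contribution $\min(3\rho n, n)$. Second, once $\rho \ge 1/3$ the dimension $r_V$ can reach $n$, so the row-space ambient $\F_q^{m \times r_V}$ coincides with $\F_q^{m \times n}$ and this parametrization gains nothing; in this regime I would supplement with the symmetric column-space parametrization $X_i = B_W M_i$, with $B_W$ spanning $\sum_i \mathrm{colspan}(X_i)$ of dimension $r_W \le 3\rho n \le m$, which yields analogous savings whenever $3\rho n < m$ (automatic as long as $1-\rho$ is not too small relative to $1/m$). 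The implicit constant in $\Omega_\rho(mn)$ degrades as $\rho \to 1$ because the gap $(1-\rho)$ that drives both savings shrinks, which is exactly what necessitates the hypothesis that $n,m$ be sufficiently large relative to $1-\rho$.
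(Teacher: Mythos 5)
Your reduction of the probability to $N(Y)/|B_R(0,\rho)|^2$ with $N(Y)$ counting triples in $B_R(0,\rho)^3$ summing to $Y$ is correct, and the broad strategy (parametrize by a combined span, count inside it, sum over spans) is in the same spirit as the paper's argument, which conditions on $\rank(X_i)$ being close to $\rho n$, resamples each $X_i$ by drawing its $n$ columns from a random subspace $U_i \subseteq \F_q^m$ of the appropriate dimension, and uses \cref{helper} to put the relevant subspaces in general position. But there are two genuine gaps. The first is the inner count: for fixed $L_1$ you must bound the number of $L_2$ with $\rank(L_2) \le \rho n$ and $\rank(L_Y - L_1 - L_2) \le \rho n$, i.e.\ the size of the intersection of two rank-metric balls whose centers are at rank distance $\rank(L_Y - L_1)$. \cref{lem:facts_about_balls} gives only the volume of a single ball and says nothing about intersections; moreover, when $Y = 0$ the matrix $D = L_Y - L_1 = -L_1$ has rank at most $\rho n$, not close to $2\rho n$, so the regime in which you claim the intersection ``shrinks rapidly'' is not the one that actually occurs for typical $L_1$. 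Since $\abs{B \cap (D + B)} = |B|\cdot \Prop_{L_2}[D - L_2 \in B]$ for a rank ball $B$, this inner count is the same kind of quantity as the one the lemma asserts is small; as written, your argument defers the main difficulty rather than resolving it. This is precisely the step the paper handles with the subspace-intersection bound of \cref{helper}.

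The second gap is coverage of the parameter range. You correctly observe that the row-span parametrization is vacuous once $3\rho n \ge n$, i.e.\ $\rho \ge 1/3$, and propose falling back on column spans when $3\rho n < m$. But $3\rho n < m$ is equivalent to $\rho < m/(3n)$, which for $n = m$ is again $\rho < 1/3$; it is emphatically not ``automatic as long as $1-\rho$ is not too small relative to $1/m$.'' For square (or nearly square) matrices and any $\rho \ge 1/3$, both parametrizations collapse simultaneously and your plan offers nothing, even though the lemma must hold for all $\rho \in (0,1)$. The event $\rank(X_1 + X_2 - Y) \le \rho n$ remains exponentially rare in that regime, but the rarity does not come from the combined span being a proper subspace of $\F_q^n$ or $\F_q^m$, so a genuinely different accounting is needed — e.g.\ the paper's, which shows that $X_1 + X_2$ is close to uniform over matrices with columns in $U_1 + U_2$ and that only an exponentially small fraction of such matrices can lie in $B_R(Y,\rho)$. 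The overcounting over $V$ that you flag is comparatively minor, but the proposed remedies (M\"obius inversion, restricting to the minimal span) are also left unexecuted, so even in the regime $\rho < 1/3$ the proof is not complete.
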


Informally, the proof proceeds as follows. First, we observe that it suffices to prove that it is exponentially unlikely that $X_1+X_2 \in B_R(Y,\rho)$, where each $X_i$ is independently sampled by first choosing a subspace in $\F_q^m$ of dimension roughly $\rho n$ uniformly at random, then sampling $n$ vectors from this subspace independently and uniformly at random. We then prove that it is very unlikely that two random low-dimensional subspaces have a somewhat large intersection, cf.~\cref{helper}. By applying this claim to the orthogonal complements of the column spans of the matrices, we see that $X_1+X_2$ in this case is obtained by sampling a reasonably large subspace of $\F_q^m$ and then sampling $n$ vectors from this subspace; such a distribution has large enough support that any sample is unlikely to lie in a small rank-metric ball. 

The formal proof follows:

\begin{proof}
	Let $\Delta_1 = \Prop_{X_1,X_2 \sim \DD_1}[X_1+X_2 \in B_R(Y,\rho)]$. Let $r = \lfloor \rho n\rfloor$ and $\eps = 1-\rho>0$. We will show the probability of interest is at most $q^{-\Omega_\eps(nm)}$, since any constant depending on $\eps$ is therefore a constant depending on $\rho$. Let $s_1, s_2 \leq r$ be integers such that, conditioned on $\rank(X_1)=s_1$ and $\rank(X_2)=s_2$, the probability $\Delta_1$ is maximized. That is, the pair $(s_1,s_2)$ maximizes the expression
	\[
		\Prop_{X_1,X_2 \sim \DD_1}[X_1+X_2 \in B_R(Y,\rho)|\rank(X_j)=r_j,j=1,2] \enspace .
	\]
	Since there are at most $n^2$ choices for the pair $(s_1,s_2)$ (as they must lie in the set $\{0,1,\ldots,\lfloor \rho n\rfloor\}^2$), we have 
	\[
		\Delta_1 \leq n^2\Prop_{X_1,X_2 \sim \DD_1}[X_1+X_2\in B_R(Y,\rho)|\rank(X_j)=s_j,j=1,2] \enspace .
	\]
	Next, note that if $s_1$ or $s_2$ is $\leq (1-\delta)r$, where $\delta=\delta(\eps)>0$ is a positive constant depending on $\eps=1-\rho$ to be selected later, then since $|B_R(0,(1-\delta)\rho|/|B_R(0,\rho)| \leq q^{-\Omega_\delta(nm)}$ (cf.~\cref{lem:facts_about_balls}), we conclude 
	\[
		\Prop_{X_1,X_2 \sim \DD_1}[\rank(X_1)\leq(1-\delta)r \lor \rank(X_2) \leq (1-\delta)r] \leq q^{-\Omega_\delta(nm)} = q^{-\Omega_\rho(nm)} \enspace .
	\]
	Thus, in this case, by the total probability rule, 
	\begin{align*}
		\Delta_1 &= \sum_{(r_1,r_2)} \Prop_{X_1,X_2 \sim \DD_1}[X_1+X_2 \in B_R(Y,\rho) \land \rank(X_j)=r_j,j=1,2]\\
		&\leq n^2\Prop_{X_1,X_2 \sim \DD_1}[\rank(X_j)=s_j, j=1,2]\\
		&\leq n^2\Prop_{X_1,X_2 \sim \DD_1}[\rank(X_1)\leq(1-\delta)r \lor \rank(X_2) \leq (1-\delta)r] \\
		&\leq q^{-\Omega_\rho(nm)} \enspace .
	\end{align*}
	Hence, we now assume $(1-\delta)r \leq s_1,s_2 \leq r$. Let $\DD_2$ denote the distribution where we 
	\begin{itemize}
		\item [(a)] sample $U_1$ and $U_2$ independently and uniformly at random among all dimension $s_1$ subspaces and $s_2$ subspaces of $\F_q^m$, respectively;
		\item [(b)] sample $n$ vectors uniformly and independently from $U_1$ and put them into the columns of a matrix $X_1$, and similarly obtain $X_2$;
		\item [(c)] output the pair $(X_1,X_2)$. 
	\end{itemize}
	For $j=1,2$, under the distribution $\DD_2$ we obtain a rank $s_j$ matrix with probability at least
	\[
		(1-q^{-s_j})(1-q^{-s_j+1}) \cdots (1-q^{-2})(1-q^{-1}) \geq \prod_{j=1}^{\infty}(1-q^{-j}) \geq .288 > \frac{1}{4}
	\]
	(this is actually the probability that the first $s_j$ columns are linearly independent). Now, note that conditioned on obtaining rank $s_j$ matrices, the distributions $\DD_1$ and $\DD_2$ are identical. That is, if $E$ denotes the event that $\rank(X_j)=s_j$ for $j=1,2$, then $\DD_1|E$ and $\DD_2|E$ are identically distributed: they are both the uniform distribution over pairs of matrices $(X_1,X_2)$ with $\dim(X_j)=s_j$ for $j=1,2$. Also
	\[
		\Prop_{X_1,X_2 \sim \DD_2}[X_1+X_2 \in B_R(Y,\rho)] \geq \Prop_{X_1,X_2 \sim \DD_2}[X_1+X_2 \in B_R(Y,\rho)|E]\Prop_{X_1,X_2 \sim \DD_2}[E] \enspace ,
	\]
	so 
	\begin{align*}
		\Prop_{X_1,X_2 \sim \DD_2}[X_1+X_2 \in B_R(Y,\rho)|E] &\leq \frac{\Prop_{X_1,X_2 \sim \DD_2}[X_1+X_2 \in B_R(Y,\rho)]}{\Prop_{X_1,X_2 \sim \DD_2}[E]}\\ 
		&\leq 4^2\cdot \Prop_{X_1,X_2 \sim \DD_2}[X_1+X_2 \in B_R(Y,\rho)] \enspace .
	\end{align*}
	Recalling that 
	\[
		\Delta_1 \leq n^2 \Prop_{X_1,X_2 \sim \DD_1}[X_1+X_2 \in B_R(Y,\rho)|E] = n^2 \Prop_{X_1,X_2 \sim \DD_2}[X_1+X_2 \in B_R(Y,\rho)|E] \enspace ,
	\]
	we see that it suffices to prove
	\begin{align} \label{eq:goal}
		\Prop_{X_1,X_2 \sim \DD_2}[X_1+X_2 \in B_R(Y,\rho)] \leq q^{-\Omega_\eps(nm)} \enspace .
	\end{align}
	Towards proving~\cref{eq:goal}, we will first prove~\cref{helper}. 
	
	\begin{claim} \label{helper}
		Let $U$ and $V$ be independent and uniform subspaces of $\F_q^m$ of dimension $d_1$ and $d_2$, respectively. Suppose $d_1 \geq d_2$. Then, for any $\alpha \in (0,1)$,
		\[
			\Prop_{U,V}[\dim(U \cap V) > \alpha d_2] \leq 4^3\exp_q(\alpha(1-\alpha)d_2^2 - \alpha d_2(m-d_1)) \enspace .
		\]
	\end{claim}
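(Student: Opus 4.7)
Plan: The approach is a first-moment / union-bound argument over intermediate subspaces. Fix any integer $k > \alpha d_2$ (we will take the smallest such below); then $\dim(U \cap V) > \alpha d_2$ implies the existence of a $k$-dimensional subspace $W \subseteq U \cap V$. Conditioning on $U$ and union-bounding over the $k$-dim subspaces contained in $U$ gives
\[
    \Pr[\dim(U \cap V) \geq k \mid U] \;\leq\; \qbinom{d_1}{k} \cdot \Pr[W \subseteq V]
\]
for any fixed $k$-dim $W \subseteq \F_q^m$. The probability on the right does not depend on $W$ (by the uniformity of $V$) and equals $\qbinom{m-k}{d_2-k}/\qbinom{m}{d_2}$; the independence of $U$ and $V$ justifies isolating it from the $U$-dependent factor.

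Next I would apply the three estimates from \cref{lem:q_nomial}: $\qbinom{d_1}{k} \leq K_q^{-1} q^{k(d_1-k)}$, $\qbinom{m-k}{d_2-k} \leq K_q^{-1} q^{(d_2-k)(m-d_2)}$, and $\qbinom{m}{d_2} \geq K_q \cdot q^{d_2(m-d_2)}$. Telescoping the exponents (the $d_2(m-d_2)$ contributions cancel, leaving $-k(m-d_2)$) produces the clean bound
\[
    \Pr[\dim(U \cap V) \geq k] \;\leq\; K_q^{-3}\cdot q^{k(d_1 + d_2 - k - m)}.
\]

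To finish, I would take $k = \lceil \alpha d_2 \rceil$ and rewrite the exponent as $k(d_1 - m) + k(d_2 - k)$; at the real value $k = \alpha d_2$ this equals exactly $\alpha(1-\alpha) d_2^2 - \alpha d_2(m - d_1)$, matching the form in the claim. In the only regime where the stated bound is meaningful (right-hand side less than $1$), the quadratic $k \mapsto k(d_1 + d_2 - k - m)$ is decreasing at $k = \alpha d_2$, so rounding up to an integer only tightens the exponent further. Combined with $K_q^{-3} < 4^3$ (since $K_q^{-1} < 4$), this completes the argument. The proof is a routine counting calculation; the only ``obstacle'' is keeping the three $q$-binomial exponents straight, and the hypothesis $d_1 \geq d_2$ enters only in that $d_2 = \min(d_1, d_2)$ is the natural upper bound on $\dim(U \cap V)$ and hence the natural variable in which to phrase the estimate.
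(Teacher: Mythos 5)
Your proof is correct and is essentially the paper's argument with the roles of $U$ and $V$ interchanged: the paper union-bounds over the $\qbinomsmall{d_2}{\alpha d_2}$ subspaces of $V$ of dimension $\alpha d_2$ and bounds the probability that $U$ contains each, arriving at $\qbinomsmall{d_2}{\alpha d_2}\qbinomsmall{m-\alpha d_2}{d_1-\alpha d_2}/\qbinomsmall{m}{d_1}$, which after the same estimates from \cref{lem:q_nomial} produces the identical exponent $\alpha(1-\alpha)d_2^2 - \alpha d_2(m-d_1)$. Your handling of the integrality of $\alpha d_2$ (rounding up and checking monotonicity of the exponent) is in fact slightly more careful than the paper's, which simply assumes without loss of generality that $\alpha d_2$ is an integer.
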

	\begin{proof} [Proof of~\cref{helper}]
		Assume without loss of generality that $\alpha d_2$ is an integer. Hence, the probability of interest is
		\begin{align*}
			\Prop_{U}\brk{\exists V' \in G(\alpha d_2,V) \text{ s.t. } V' \subseteq U} &\leq \sum_{V' \in G(\alpha d_2),V}\Prop_U[V' \subseteq U] = \qbinom{d_2}{\alpha d_2} \frac{\qbinom{m-\alpha d_2}{d_1-\alpha d_2}}{\qbinom{m}{d_1}} \enspace ,
		\end{align*}
		where the equality $\Prop_{U}[V' \subseteq U] = \frac{\qbinomsmall{m-\alpha d_2}{d_1-\alpha d_2}}{\qbinomsmall{m}{d_1}}$ follows from the fact that the number of subspaces of $\F_q^m$ of dimension $d_1$ which contain a fixed subspace of dimension $\alpha d_2$ is precisely the number of subspaces of $\F_q^{m-\alpha d_2}$ of dimension $d_1-\alpha d_2$, i.e., $\qbinomsmall{m-\alpha d_2}{d_1-\alpha d_2}$. Now, using our estimates for $q$-nomial coefficients, this last quantity is at most
		\[
			K_q^{-3}\exp_q(\alpha(1-\alpha)d_2^2 + (d_1-\alpha d_2)(m-d_1) - d_1(m-d_1)) = K_q^{-3}\exp_q(\alpha(1-\alpha)d_2^2 - \alpha d_2(m-d_1)) \enspace .
		\]
		Recalling that $K_q^{-1}<4$, the claim follows.
	\end{proof}
	Now, set $\alpha = \frac{\eps^2}{\eps+\delta-\delta\eps}$ and $d_1 = n-s_1$, $d_2 = n-s_2$ in the claim (where we assume wlog that $s_1 \leq s_2$). So then 
	\[
		\alpha(1-\alpha)d_2^2 \leq \alpha d_2^2 \leq \frac{\eps^2}{\delta+\eps-\eps\delta}\pns{(\delta+\eps-\eps\delta)n}^2 = \eps^2(\eps+\delta-\eps\delta)n^2 
	\]
	and 
	\[
		-\alpha d_2(m-d_1) \geq -\frac{\eps^2}{\eps+\delta-\eps\delta}(\eps+\delta-\eps\delta)n(m-\eps n) \geq -\eps^2 nm + \eps^3n^2 \enspace .
	\]
	Now, set $\delta=\eps$. Thus, the probability that $\dim(U_1^\bot \cap U_2^\bot) > \frac{\eps^2}{\eps+\delta-\delta\eps}d_2 \geq \frac{\eps^3}{\eps+\delta-\delta\eps}n \geq \frac{\eps^2}{2}n$ is at most
	\[
		4^3\exp_q(-\eps^2 nm + \eps^3n^2 + \eps^2(2\eps-\eps^2)n^2)
	\]
	Assuming $n,m$ are sufficiently large, this is at most 
	\[
		4^3\exp_q\pns{-\tfrac{\eps^2}{2}nm} \enspace .
	\]
	We will now condition on this event not occurring. Note that this implies $\dim(U_1+U_2) \geq m - \frac{\eps^2}{2}n = (1-\frac{b\eps^2}{2})m$.
	
	Now, note that sampling $u_1 \in U_1$ and $u_2 \in U_2$ independently and uniformly at random and outputing $u_1+u_2$ is the same as sampling $v \in V := U_1+U_2$ uniformly at random. Hence, for any fixed matrix $B \in \F_q^{m \times n}$, the probability of sampling $B$ under this distribution is at most
	\[
		\exp_q\pns{-(1-\tfrac{b\eps^2}{2})m}^n =  \exp_q\pns{-nm+\tfrac{\eps^2}{2}n^2} \enspace .
	\] 
	Indeed, we need to choose $n$ vectors independently from the subspace $V$, and each vector is sampled with probability $q^{-\dim(V)} \leq q^{-(1-\tfrac{b\eps^2}{2})m}$. (Of course, if one of the columns of $Y$ is not in $V$, then we sample $Y$ with probability $0$.) Thus, the probability that we sample an element of $B_R(Y,\rho)$ if $X_1,X_2 \sim \DD_2$ and we output $X_1+X_2$, conditioned on $\dim(U_1+U_2) \geq (1-\tfrac{b\eps^2}{2})m$, is at most
	\begin{align*}
		|B_R(Y,\rho)|\exp_q\pns{-nm+\tfrac{\eps^2}{2}n^2} &\leq \exp_q\pns{(1-\eps)\eps n^2 + (1-\eps)nm - nm + \tfrac{\eps}{2}n^2}\\ 
		&\leq \exp_q\pns{-\eps nm + n^2\pns{\tfrac{\eps^2}{2} + (1-\eps)\eps}} \\
		&= \exp_q\pns{-\eps(nm-n^2)-\tfrac{\eps^2}{2}n^2}\enspace .
	\end{align*}
	Note that either if $m = \omega(n)$ or $m = \Theta(n)$, we have that the term in the exponent is $-\Theta_\eps(nm)$. This establishes~\cref{eq:goal} and therefore completes the proof. 
\end{proof}

We now show that if $\ell$ matrices from $B_R(0,\rho)$ are chosen at random, then it is unlikely that $\omega(\ell)$ of their linear combinations lie in $B_R(0,\rho)$. The proof combines Lemmas~\ref{lem:sauer-shelah} and~\ref{lem:two_matrices}.

\begin{lemma} \label{lem:span}
	For every $\rho \in (0,1)$, there is a constant $C=C_{\rho,q}>1$ such that for all integers $n \leq m$ and $\ell = o(\sqrt{nm})$, if $X_1,\ldots,X_\ell$ are selected independently and uniformly at random from $B_R(0,\rho)$, then
	\[
		\Prop[|\spa\{X_1,\ldots,X_\ell\}\cap B_R(0,\rho)| \geq C\cdot \ell] \leq q^{-(4-o(1))nm} \enspace .
	\]
\end{lemma}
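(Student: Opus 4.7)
The plan is to follow the GHK11 strategy: use the $q$-ary Sauer--Shelah lemma to extract a rigidly structured chain from any hypothetically large ``bad'' set of coefficient vectors, then iteratively apply \cref{lem:two_matrices} along that chain. Setting $S = \{\mathbf{a} \in \F_q^\ell : \sum_i a_i X_i \in B_R(0,\rho)\}$, the map $\mathbf{a} \mapsto \sum_i a_i X_i$ can only collapse coefficient vectors, so $|\spa\{X_1,\ldots,X_\ell\} \cap B_R(0,\rho)| \leq |S|$ and it suffices to bound $\Pr[|S| \geq C\ell]$. Whenever $|S| \geq C\ell$, \cref{lem:sauer-shelah} with $c = 2$ produces a $w \in \F_q^\ell$ and a $2$-increasing chain $y_1, \ldots, y_D \in S + w$ of length $D \geq \tfrac{1}{2}\log_q(C/(2(q-1)))$; choosing $C = C_{\rho,q}$ sufficiently large makes $D$ as large a constant as we like.

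The next step is a union bound over the at most $q^{(D+1)\ell}$ choices of $(w, y_1, \ldots, y_D)$, combined with a probability bound
\[
\Pr\!\left[\forall j \in [D] : \sum_i (y_j - w)_i X_i \in B_R(0,\rho)\right] \leq q^{-D \cdot \Omega_\rho(nm)}
\]
for each fixed pair. I would prove this by revealing the $X_i$'s in stages adapted to the chain: at stage $0$ reveal every $X_i$ with $i \notin U_D := \bigcup_j \supp(y_j)$, and at stage $j \geq 1$ reveal $X_i$ for $i \in T_j = \supp(y_j) \setminus U_{j-1}$. After isolating the freshly revealed batch, the conditional probability of the $j$-th constraint is controlled by a generalization of \cref{lem:two_matrices} to sums $\sum_{i \in I} c_i X_i$ of $|I| \geq 2$ independent uniforms in $B_R(0,\rho)$ with nonzero scalar coefficients (which follows from the stated lemma by conditioning on all but two of the matrices and using that $c X$ is uniform on $B_R(0,\rho)$ for $c \in \F_q^\ast$), yielding a factor of $q^{-\Omega_\rho(nm)}$ per stage. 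Putting this together, $\Pr[|S| \geq C\ell] \leq q^{(D+1)\ell - D \Omega_\rho(nm)} = q^{-(D\Omega_\rho - o(1))nm}$, since $(D+1)\ell = o(nm)$ whenever $\ell = o(\sqrt{nm})$; picking $D \geq 4/\Omega_\rho + 1$ (and $C$ correspondingly) delivers the target $q^{-(4-o(1))nm}$.

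The hard part will be making the iterative conditioning work cleanly. Naively, the coefficient $(y_j - w)_i = (y_j)_i - w_i$ for $i \in T_j$ can vanish if $w_i = (y_j)_i$, and the constraint at stage $j$ may also involve $X_i$'s for $i \in T_{j'} \cap \supp(w)$ with $j' > j$ that are not yet revealed (these contribute with coefficient $-w_i$ since $(y_j)_i = 0$ outside $U_j$). Handling this coupling -- by carefully choosing the reveal order so that all unrevealed $X_i$'s contributing to the stage-$j$ constraint belong to the batch being revealed then (for instance by grouping $\supp(w) \cap U_D$ into stage $0$), and by splitting the union bound over $w$ according to $\supp(w) \cap U_D$ to rule out choices of $w$ that leave fewer than two fresh nonzero coefficients at some stage -- is where the bulk of the technical work will lie. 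Once this has been arranged so that each stage cleanly contributes a factor of $q^{-\Omega_\rho(nm)}$, combining with the Sauer--Shelah and union-bound steps above gives the lemma.
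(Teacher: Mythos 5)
Your overall strategy (Sauer--Shelah to extract a $2$-increasing chain, then iterated conditioning via \cref{lem:two_matrices}) is the right one and matches the paper's, but the way you set up the chain creates a genuine gap in the central step, which you flag but do not resolve. You put the $2$-increasing structure on the shifted vectors $y_j = v_j + w$ while keeping the constraints in the form $X(y_j - w) \in B_R(0,\rho)$. The coefficient of a ``fresh'' matrix $X_i$, $i \in T_j = \supp(y_j)\setminus U_{j-1}$, is then $(y_j)_i - w_i$, which can vanish; since $w$ may be chosen (in the union bound) to agree with $y_j$ on \emph{both} fresh indices of \emph{every} stage, there are tuples $(w,y_1,\dots,y_D)$ for which your staged argument produces no decay at any stage. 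Your proposed fix of ``splitting the union bound over $w$ \dots to rule out choices of $w$ that leave fewer than two fresh nonzero coefficients'' is not a valid move: those $w$ are legitimate outputs of \cref{lem:sauer-shelah} and index real events that must still be bounded. Separately, even for good $w$, the stage-$j$ constraint involves unrevealed matrices $X_i$ with $i \in \supp(w)\setminus U_j$; your stage-$0$ grouping handles this, but only at the cost of shrinking the set of usable fresh indices to $T_j \setminus \supp(w)$, which makes the first problem worse.

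The paper's resolution is a small but essential reformulation: rewrite the constraint as $X(v_j + w) = X(y_j) \in B_R(X(w),\rho)$, so that the coefficients appearing at stage $j$ are $(y_j)_i$ for $i \in \supp(y_j)$ --- nonzero by definition of support, with the $2$-increasing property guaranteeing two fresh ones --- and then absorb the randomness of the center $X(w)$ by a union bound over all $q^{nm}$ possible values $Y \in \F_q^{m\times n}$ (affordable, since the per-chain bound is $q^{-\delta d\, nm}$ with $d\delta \geq 5$). With that fix the rest of your outline goes through; your remaining deviations are harmless or mildly advantageous (your union bound over chains costs only $q^{(D+1)\ell}$ versus the paper's $q^{\ell(L+1)}$ over subsets $S$, and your reduction of the multi-term version of \cref{lem:two_matrices} to the two-term version by conditioning and rescaling is exactly what the paper does).
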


\begin{proof}
	Let $L = C \cdot \ell$ (for some $C=C_{\rho,q}$ to be selected later) and let $c=2$. Let $\delta=\delta_{\rho}$ be the constant in the $\Omega_\rho(\cdot)$ from~\cref{lem:two_matrices}. Let 
	\begin{align*}
		d &= \bigg\lfloor \frac{1}{c}\log_q\frac{L}{2} - \pns{1-\frac{1}{c}}\log_q((q-1)\ell)\bigg\rfloor = \bigg\lfloor \frac{1}{2}\log_q\frac{L}{2} - \frac{1}{2}\log_q((q-1)\ell)\bigg\rfloor\\ 
		&\geq \frac{1}{2}\log_q\frac{L}{2(q-1)\ell} - 1 = \frac{1}{2}\log_q\frac{C}{2(q-1)q^2} \enspace .
	\end{align*}
	Finally, for a vector $u \in \F_q^\ell$, let $X(u) = \sum_{i}u_iX_i$. 
	
	Towards proving the lemma, we prove the following claim:
	
	\begin{claim} \label{helper2}
		For any $S \subseteq \F_q^\ell$ with $|S| = L+1$, 
		\begin{align}\label{eq:bound}
			\Prop[\forall v \in S, X(v) \in B_R(0,\rho)] < q^{nm}q^{-\delta dnm} \enspace .
		\end{align}
	\end{claim}
	
	\begin{proof} [Proof of~\cref{helper2}]
		Let $w$ and $v_1,\ldots,v_d \in S$ be as given by~\cref{lem:sauer-shelah}. That is, $v_1+w,v_2+w,\ldots,v_d+w$ is a 2-increasing sequence. Then
		\begin{align*}
			\Pr[\forall v \in S, X(v) \in B_R(0,\rho)] &\leq \Pr[\forall j \in [d],X(v_j) \in B_R(0,\rho)]\\
			&= \Pr[\forall j \in [d], X(v_j)+X(w) \in B_R(X(w),\rho)]\\
			&= \Pr[\forall j \in [d], X(v_j+w) \in B_R(X(w),\rho)]
		\end{align*}
		Fix $Y \in \F_q^{m \times n}$. Then
		\begin{align*}
			\Pr&[\forall j \in [d], X(v_j+w) \in B_R(Y,\rho)]\\
			&= \prod_{j=1}^d\Pr[X(v_j+w) \in B_R(Y,\rho)|X(v_i+w) \in B_R(Y,\rho) ~\forall 1 \leq i \leq j-1]\\
			&\leq \prod_{j=1}^d\max_{\substack{Z_k \in B_R(0,\rho): \\ k \in \bigcup_{i=1}^{j-1}\supp(v_i+w)}}\Pr\brk{X(v_j+w) \in B_R(Y,\rho)|X_k=Z_k ~ \forall k \in \bigcup_{i=1}^{j-1}\supp(v_i+w)}\\
			&\leq \pns{q^{-\delta nm}}^d \enspace .
		\end{align*}
		The last inequality follows from~\cref{lem:two_matrices} as follows: let $i_1,i_2$ be distinct elements of $\supp(v_j+w) \setminus \bigcup_{i=1}^{j-1}\supp(v_i+w)$ (which exist thanks to the 2-increasing property). Then apply~\cref{lem:two_matrices} with $A_1 = (v_j)_{i_1}X_{i_1}$, $A_2 = (v_j)_{i_2}X_{i_2}$ (which are distributed uniformly over $B_R(0,\rho)$ if $X_{i_1},X_{i_2}$ are), and $B = Y - \sum_{k \in [\ell]\setminus\{i_1,i_2\}}(v_j+w)_kX_k = Y - \sum_{k \in [\ell]\setminus\{i_1,i_2\}}(v_j+w)_kZ_k$ (which is a fixed matrix). 
		
		By taking a union bound over all $q^{nm}$ choices of $Y \in \F_q^{m \times n}$, the claim follows. 
	\end{proof}
	We now bound the probability that more than $L$ elements of $\spa\{X_1,\ldots,X_\ell\}$ lie in $B_R(0,\rho)$. This occurs iff there exists a subset $S \subseteq \F_q^\ell$ of size $L+1$ such that $\forall v \in S$, $X(v) \in B_R(0,\rho)$. By taking a union bound over the probability in \eqref{eq:bound}, this occurs with probability at most $q^{\ell(L+1)}q^{nm}q^{-\delta dnm}$. Assuming $C=C_{\rho,q}$ is large enough so that $d \geq \frac{5}{\delta}$, this probability is at most
	\[
		q^{o(nm)+nm-5nm} = q^{-(4-o(1))nm} \enspace . \qedhere
	\]
\end{proof}

We are now prepared to prove~\cref{thm:main}, which we now restate formally. 

\begin{thm} [\cref{thm:main}, restated]
	Let $\rho \in (0,1)$, $n\leq m$ integers and set $b = \frac{n}{m}$. Then there exists a constant $c=c_{\rho,q}>0$ such that for any $\eps>0$ and sufficiently large $n,m$, letting $R = (1-\rho)(1-b\rho)-\eps$, if $\mathcal{C} \subseteq \F_q^{m \times n}$ is a random linear code of rate $R$, then
	\[
		\Pr[\mathcal{C} \text{ is } (\rho,\tfrac{c}{\eps}) \text{ list-decodable}] > 1-q^{-nm} \enspace .
	\]
\end{thm}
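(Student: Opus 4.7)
My plan is to reduce the theorem, via a union bound over ball centers, to an application of Lemma~\ref{lem:span}. Setting $M = \lceil c/\eps \rceil + 1$, the event that $\mathcal{C}$ is not $(\rho, c/\eps)$-list-decodable is $\bigcup_Y \{|\mathcal{C}\cap B_R(Y,\rho)|\geq M\}$, so by a union bound over the $q^{mn}$ choices of $Y$ it suffices to show $\Pr[|\mathcal{C}\cap B_R(Y,\rho)|\geq M] \leq q^{-(2+o(1))mn}$ for every fixed $Y$.

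For a fixed $Y$, whenever $|\mathcal{C}\cap B_R(Y,\rho)|\geq M$, extracting a maximal linearly independent subset $C_1,\ldots,C_\ell$ of $\mathcal{C}\cap B_R(Y,\rho)$ yields $\ell \geq \lceil\log_q M\rceil$ linearly independent codewords in $B_R(Y,\rho)$ whose span contains at least $M$ elements of $B_R(Y,\rho)$. For a uniformly random $k=Rmn$-dimensional subspace $\mathcal{C}$, the probability that a fixed $\ell$-tuple of linearly independent matrices lies entirely in $\mathcal{C}$ is $\leq q^{-(1-R)mn\ell}$. A union bound over tuples, followed by recognizing the inner sum as an expectation under the uniform distribution on $B_R(Y,\rho)^\ell$, gives
\[
    \Pr[|\mathcal{C}\cap B_R(Y,\rho)|\geq M] \leq \sum_{\ell \geq \lceil \log_q M\rceil} |B_R(Y,\rho)|^\ell q^{-(1-R)mn\ell} \cdot p_\ell,
\]
where $p_\ell := \Prop_{X_i \sim \mathrm{Unif}(B_R(Y,\rho))}[|\spa\{X_1,\ldots,X_\ell\}\cap B_R(Y,\rho)|\geq M]$. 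By Lemma~\ref{lem:facts_about_balls} and the identity $\rho+b\rho-b\rho^2 = 1-(1-\rho)(1-b\rho) = 1-R-\eps$, the prefactor simplifies to $4^\ell q^{-\eps mn \ell}$.

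The key step is bounding $p_\ell$. For small $\ell$ (namely $\ell \leq M/C_{\rho,q}$, where $C_{\rho,q}$ is the constant from Lemma~\ref{lem:span}), the event defining $p_\ell$ implies $|\spa\{X_i\}\cap B_R(Y,\rho)|\geq C_{\rho,q}\ell$, and I invoke the variant of Lemma~\ref{lem:span} with $B_R(0,\rho)$ replaced by $B_R(Y,\rho)$ throughout to get $p_\ell \leq q^{-(4-o(1))mn}$. For the tail $\ell > M/C_{\rho,q}$, I drop $p_\ell \leq 1$ and observe that the prefactor $4^\ell q^{-\eps mn\ell}$ is already $\leq q^{-(\eps M/C_{\rho,q})mn + o(mn)} \leq q^{-5mn}$ once $c=c_{\rho,q}$ is chosen sufficiently large in terms of $\rho$ and $q$. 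Combining both regimes yields $\Pr[|\mathcal{C}\cap B_R(Y,\rho)|\geq M] \leq q^{-(4-o(1))mn}$, and the final union bound over $Y$ gives $\leq q^{-(3-o(1))mn} < q^{-nm}$, as required.

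The main obstacle I anticipate is cleanly justifying the $Y$-centered variant of Lemma~\ref{lem:span}. The naive translation $X_i \mapsto Y+X_i$ does not conjugate linear spans cleanly: a combination $\sum a_i(Y+X_i)$ lies in $B_R(Y,\rho)$ iff $\sum a_i X_i$ lies in $B_R((1-\sum a_i)Y,\rho)$, a coefficient-sum-dependent translate of the ball. Nevertheless, the proof of Claim~\ref{helper2} carries through verbatim with $B_R(0,\rho)$ replaced by $B_R(Y,\rho)$ on both sides, because at each 2-increasing step the invocation of Lemma~\ref{lem:two_matrices} uses an ad hoc target matrix that absorbs $Y$ together with the already-conditioned $Z_k$'s, and Lemma~\ref{lem:two_matrices} is already stated for arbitrary targets. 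A secondary bookkeeping concern is choosing $c=c_{\rho,q}$ large enough that (i) $M \geq C_{\rho,q}\lceil\log_q M\rceil$, so that Lemma~\ref{lem:span} applies throughout the small-$\ell$ regime, and (ii) the large-$\ell$ tail is suppressed; both are easily arranged with a constant depending only on $\rho$ and $q$.
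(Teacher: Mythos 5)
Your proposal is correct and follows the paper's strategy in all essential respects: the same decomposition into linearly independent $\ell$-tuples whose span captures the whole list, the same split between small $\ell$ (handled by Lemma~\ref{lem:span}) and large $\ell$ (handled by the trivial bound $p_\ell \leq 1$ together with the decaying prefactor $4^\ell q^{-\eps mn \ell}$), and the same computation of the probability that a random $Rmn$-dimensional subspace contains a fixed independent tuple. The one genuine point of divergence is how the ball's center is handled. The paper randomizes the center $X$, rewrites $|B_R(X,\rho)\cap\mathcal{C}|$ as $|B_R(0,\rho)\cap(\mathcal{C}+X)|$, and absorbs the coset into a uniformly random subspace $\mathcal{C}^*$ of dimension $Rnm+1$ (the randomness of $X$ is what makes $\mathcal{C}^*$ uniform); this lets it invoke Lemma~\ref{lem:span} exactly as stated, centered at the origin. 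You instead fix each center $Y$ and invoke a $Y$-centered variant of Lemma~\ref{lem:span}. Your justification of that variant is sound, and you correctly identify the only subtlety: a linear combination of elements of $B_R(Y,\rho)$ lands in a coefficient-dependent translate of the ball, so the translation does not commute with taking spans. As you observe, this is harmless because in the proof of Claim~\ref{helper2} each application of Lemma~\ref{lem:two_matrices} involves only two ``fresh'' coordinates $i_1,i_2$; writing $X_{i_k}=Y+W_{i_k}$ with $W_{i_k}$ uniform on $B_R(0,\rho)$ (and noting that nonzero scalar multiples of $B_R(0,\rho)$ are again $B_R(0,\rho)$), all the deterministic shifts involving $Y$ and the conditioned $Z_k$'s are absorbed into the arbitrary fixed target of Lemma~\ref{lem:two_matrices}. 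So each route buys the same bound; the paper's $\mathcal{C}^*$ trick avoids restating Lemma~\ref{lem:span}, while yours avoids the slightly delicate step of arguing that $\mathcal{C}^*$ is a uniformly random subspace of dimension $Rnm+1$.
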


\begin{proof}
	Let $c = 2C$, where $C$ is the constant from~\cref{lem:span}, let $L = \lceil\tfrac{c}{\eps}\rceil$, and let $n,m$ be larger than $L$ and sufficiently large so that the $o(1)$ term of~\cref{lem:span} is at most 1. 
	
	For $X \in \F_q^{m \times n}$ selected uniformly at random, we will study the quantity 
	\[
		\Delta := \Prop_{\mathcal{C},X}[|B_R(X,\rho)\cap \mathcal{C}| \geq L] \enspace .
	\]
	By taking a union bound over $X$, note that proving $\Delta \leq q^{-nm} \cdot q^{-nm}$ will suffice to conclude the theorem. 
	
	As a first step, we show that we can move $X$ to the origin without significantly changing the probability $\Delta$. Indeed, 
	\begin{align*}
		\Delta 	&= \Prop_{\mathcal{C},X}[|B_R(X,\rho)\cap \mathcal{C}| \geq L]\\
				&= \Prop_{\mathcal{C},X}[|B_R(0,\rho)\cap \mathcal{C}+X| \geq L]\\
				&\leq \Prop_{\mathcal{C},X}[|B_R(0,\rho)\cap \mathcal{C}+\{0,X\}| \geq L]\\
				&\leq \Prop_{\mathcal{C}^*}[|B_R(0,\rho)\cap \mathcal{C}^*| \geq L] \enspace ,
	\end{align*}
	where $\mathcal{C}^*$ is a random $Rnm+1$ dimensional subspace containing $\mathcal{C} + \{0,X\}$. More explicitly, $\mathcal{C^*}$ is sampled by first sampling a dimension $Rnm$ subspace $\mathcal{C} \subseteq \F_q^{mn}$. Then, if $X \notin \mathcal{C}$, we set $\mathcal{C}^* = \mathcal{C}+\{\alpha X: \alpha \in \F_q\}$; while if $X \in \mathcal{C}$, we set $\mathcal{C}^* = \mathcal{C} + \{\alpha Y:\alpha \in \F_q\}$ where $Y$ is picked uniformly at random from $\F_q^{mn} \setminus \mathcal{C}$. Recalling that $X$ is uniformly random, we see that $\mathcal{C}^*$ is a uniformly random subspace of dimension $Rnm+1$. 
	
	Now, for each integer $\ell$ satisfying $\log_q L \leq \ell \leq L$, let $\mathcal{F}_\ell$ denote the set of all tuples $(A_1,\ldots,A_\ell) \in B_R(0,r)^\ell$ such that $A_1,\ldots,A_\ell$ are linearly independent and $|\spa\{A_1,\ldots,A_\ell\} \cap B_R(0,\rho)| \geq L$. Let 
	\[
		\mathcal{F} = \bigcup_{\log_qL \leq \ell \leq L}\mathcal{F}_\ell \enspace .
	\]
	Denote $\mathbf{A} = (A_1,\ldots,A_\ell)$ and $\{\mathbf{A}\} = \{A_1,\ldots,A_\ell\}$ (i.e., $\mathbf{A}$ denotes the ordered tuple whereas $\{\mathbf{A}\}$ denotes the unordered set). 
	
	Towards bounding $\Prop_{\mathcal{C}^*}[|B_R(0,\rho)\cap \mathcal{C}^*| \geq L]$, notice that if $|B_R(0,\rho)\cap \mathcal{C}^*| \geq L$, then there must exist some $\mathbf{A} \in \mathcal{F}$ for which $\mathcal{C}^* \supseteq \{\mathbf{A}\}$. Indeed, we may choose any maximal linearly independent subset of $B_R(0,\rho) \cap \mathcal{C}^*$ if this set has size $\leq L$, or any linearly independent subset of $B_R(0,\rho) \cap \mathcal{C}^*$ of size $L$ otherwise. 
	
	Thus, by a union bound,
	\[
		\Delta \leq \sum_{\mathbf{A} \in \mathcal{F}}\Prop_{\mathcal{C}^*}[\mathcal{C}^* \supseteq \{\mathbf{A}\}] = \sum_{\ell = \lceil\log_q L\rceil}^{L}\sum_{\mathbf{A} \in \mathcal{F}_\ell}\Prop_{\mathcal{C}^*}[\mathcal{C}^* \supseteq \{\mathbf{A}\}] \enspace .
	\]
	Note that for $\mathbf{A} = (A_1,\ldots,A_\ell) \in \mathcal{F}$, by linear independence we have
	\[
		\Prop_{\mathcal{C}^*}[\mathcal{C}^* \supseteq \{\mathbf{A}\}] = \prod_{j=1}^\ell \Prop_{\mathcal{C}^*}[A_j\in \mathcal{C}^*|A_1,\ldots,A_{j-1} \in \mathcal{C}^*] = \prod_{j=1}^{\ell}\frac{q^{Rnm+1}-q^{j-1}}{q^{nm}-q^{j-1}} \leq \pns{\frac{q^{Rnm+1}}{q^{nm}}}^\ell \enspace .
	\]
	Thus, we find
	\[
		\Delta \leq \sum_{\ell = \lceil\log_q L\rceil}^{L}|\mathcal{F}_\ell|\cdot\pns{\frac{q^{Rnm+1}}{q^{nm}}}^\ell \enspace .
	\]
	We now bound $|\mathcal{F}_\ell|$ depending on the value of $\ell$. 
	
	\begin{itemize}
		\item \textbf{Case 1.} $\ell < \frac{3}{\eps}$. 
		
		In this case, note that $\frac{|\mathcal{F}_\ell|}{|B_R(0,\rho)|^\ell}$ is a lower bound on the probability that $\ell$ matrices $X_1,\ldots,X_\ell$ chosen independently and uniformly at random from $B_R(0,\rho)$ are such that 
		\[
			|\spa\{X_1,\ldots,X_\ell\} \cap B_R(0,\rho)| \geq L \enspace.
		\]
		\cref{lem:span} tells us that this probability is at most $q^{-3nm}$. Thus,
		\[
			|\mathcal{F}_\ell| \leq |B_R(0,\rho)|^\ell q^{-3nm} \leq \pns{4q^{mn(\rho+\rho b-\rho^2b)}}^\ell \cdot q^{-3nm}
		\]
		
		\item \textbf{Case 2.} $\ell \geq \frac{3}{\eps}$. 
		
		In this case, we have the (simple) bound of 
		\[
			|\mathcal{F}_\ell| \leq |B_R(0,\rho)|^\ell \leq \pns{4q^{mn(\rho+\rho b-\rho^2b)}}^\ell \enspace .
		\]
	\end{itemize}
	Combining these inequalities, we obtain the following bound:
	\begin{align*}
		\Delta &\leq \sum_{\ell = \lceil \log_q L\rceil}^{\lceil \tfrac{3}{\eps}\rceil-1}|\mathcal{F}_\ell|\cdot\pns{\frac{q^{Rnm+1}}{q^{nm}}}^\ell + \sum_{\ell=\lceil \tfrac{3}{\eps}\rceil}^{L}|\mathcal{F}_\ell|\cdot\pns{\frac{q^{Rnm+1}}{q^{nm}}}^\ell\\
		&\leq \sum_{\ell = \lceil \log_q L\rceil}^{\lceil \tfrac{3}{\eps}\rceil-1} \pns{4q^{mn(\rho+\rho b-\rho^2b)}}^\ell \cdot q^{-3nm} \cdot \pns{\frac{q^{Rnm}}{q^{nm}}}^\ell\cdot q^\ell + \sum_{\ell=\lceil \tfrac{3}{\eps}\rceil}^{L}\pns{4q^{mn(\rho+\rho b-\rho^2b)}}^\ell\cdot\pns{\frac{q^{Rnm}}{q^{nm}}}^\ell\cdot q^\ell\\
		&\leq q^{-3nm}\sum_{\ell = \lceil \log_q L\rceil}^{\lceil \tfrac{3}{\eps}\rceil-1}4^\ell\cdot q^\ell\cdot q^{(-\eps nm)\ell} + \sum_{\ell=\lceil \tfrac{3}{\eps}\rceil}^{L}4^\ell\cdot q^\ell \cdot q^{(-\eps nm)\ell}\\
		&\leq (4q)^L\pns{q^{-3nm} \cdot\frac{3}{\eps} + L\cdot q^{-\eps nm \cdot \frac{3}{\eps}}}\\
		&< q^{-nm} \cdot q^{-nm}
	\end{align*}
	assuming $n,m$ are large enough compared to $\eps$. 
\end{proof}

\section{Conclusion} \label{sec:conclusion}

We have shown that random $\F_q$-linear rank-metric codes of rate $R =
(1-\rho)(1-b\rho)-\eps$ are with high probability $(\rho,O(1/\eps))$
list-decodable, where the big-$O$ notation hides constants depending
only on $\rho$ and $q$. This matches the performance of uniformly
random rank-metric codes up to constant factors.

Many open directions remain to be pursued; we mention a couple of
problems that we find particularly interesting. First of all, we are
unable to give good control of the list size when $\rho \to 1$. One
can show that if $\rho = 1-\eps$, then there exist codes of rate
$\Omega(\eps-\eps b + \eps^2b)$ which are $O(1/(\eps-\eps b +
\eps^2b))$ list-decodable. We provide this argument
in~\cref{sec:appendix}. Proving that linear codes can achieve a
similar tradeoff remains an interesting open problem. We remark that
similar issues with the proof of \cite{GHK11} for the high noise
regime in the Hamming metric case were addressed and resolved, using
different techniques (based on appropriate Gaussian processess) in
\cite{cheraghchi2013restricted,wootters2013list,rudra2014every}. A recent
work~\cite{rudra2017average} provides a common proof for all noise
regimes albeit with weaker list size guarantees. It will be
interesting to see if these other approaches can be adapted to the
rank-metric setting.

Lastly, we note that it is common to view a rank-metric code
$\mathcal{C}$ as a subset of $\F_{q^m}^n$, and then insist that such a
code be $\F_{q^m}$-linear. This is done by fixing a basis for
$\F_{q^m}$ over $\F_q$ and then identifying a vector $\mathbf{x} \in
\F_{q^m}^n$ with the matrix $X \in \F_q^{m \times n}$, where the $i$th
column of $X$ is $\mathbf{x}_i$ written in the coordinates defined by
the basis. Thus, it is natural to ask if a random $\F_{q^m}$-linear
subspace $\mathcal{C} \subset \F_{q^m}^n$ is rank-metric
list-decodable. By adjusting the constant $C$ in the proof
of~\cref{lem:span}, one can see that the proof still goes
through. Unfortunately, $C$ will have to grow polynomially in $q^m$
(rather than just $q$), so the resulting list sizes will be on the
order of $q^{O(m)}/\eps$. Thus, we are unable to conclude that random
$\F_{q^m}$-linear codes are rank-metric list-decodable, let alone
prove the optimal $O(1/\eps)$ list size. Indeed, we are currently
unaware of a proof that \emph{any} $\F_{q^m}$-linear rank-metric codes
are list-decodable beyond half the minimum distance (the codes
constructed by Guruswami, Wang and Xing~\cite{GWX16}
do not satisfy this property). Thus, existentially proving that some
$\F_{q^m}$-linear rank-metric code is list-decodable or concluding
that no such code exists would represent an important step forward in
our understanding of the list-decodability of rank-metric codes.

\bibliographystyle{alpha}
\bibliography{refs}

\appendix

\section{Existential Results for Random Codes} \label{sec:appendix}

We now provide certain existential arguments in order to set expectations. These arguments are completely standard (and presumably have appeared elsewhere; indeed,~\cref{prop:random_fixed_radius} is more-or-less implicit in~\cite{ding2015list}). Recall that a random code $\mathcal{C}$ of rate $R \in (0,1)$ is sampled by including each element $X \in \F_q^{m \times n}$ in $\mathcal{C}$ with probability $q^{(R-1)mn}$. (Thus, $\E[|\mathcal{C}|] = q^{Rmn}$.)

\paragraph{Random Codes of Fixed Radius} First, we show that random codes achieve the same parameters as we have shown random linear codes achieve. This 

\begin{prop} \label{prop:random_fixed_radius}
	Let $\eps>0$ and $\rho \in (0,1)$. A random code $\mathcal{C}$ of rate $R := (1-\rho)(1-b\rho)-\eps$ is $(\rho,O(1/\eps))$-list-decodable with probability at least $1-q^{-\Theta(mn)}$, assuming $m,n$ are sufficiently large compared to $1/\eps$. 
\end{prop}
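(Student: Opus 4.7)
The plan is to do a direct first-moment / union bound argument over all possible centers and all possible $L$-element lists, exploiting the fact that in a random code (as opposed to a random linear code) the events $\{X_i \in \mathcal{C}\}$ are mutually independent. Specifically, I would choose $L = \lceil c/\eps \rceil$ for a constant $c$ to be fixed later, and bound
\[
\Prop_{\mathcal{C}}[\mathcal{C}\ \text{is not}\ (\rho,L)\text{-list-decodable}] \leq \sum_{X \in \F_q^{m\times n}} \sum_{\substack{S \subseteq B_R(X,\rho) \\ |S|=L}} \Prop_\mathcal{C}[S \subseteq \mathcal{C}] \enspace .
\]

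For each fixed $X$ and each fixed $L$-subset $S \subseteq B_R(X,\rho)$, the probability that every matrix in $S$ lands in $\mathcal{C}$ equals $q^{(R-1)mn \cdot L}$ by independence. The number of $L$-subsets of $B_R(X,\rho)$ is at most $|B_R(X,\rho)|^L$, which by~\cref{lem:facts_about_balls} is bounded by $(4\, q^{mn(\rho + \rho b - \rho^2 b)})^L$. Summing over the $q^{mn}$ choices of $X$ yields the bound
\[
\Prop_\mathcal{C}[\mathcal{C}\ \text{is not}\ (\rho,L)\text{-list-decodable}] \leq q^{mn} \cdot 4^L \cdot q^{mn L (\rho + \rho b - \rho^2 b + R - 1)} \enspace .
\]

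The key algebraic identity is that $(1-\rho)(1-b\rho) = 1 - (\rho + \rho b - \rho^2 b)$, so the exponent in the per-list probability simplifies as $\rho + \rho b - \rho^2 b + R - 1 = R - (1-\rho)(1-b\rho) = -\eps$. Hence the total failure probability is at most $q^{mn}\cdot 4^L \cdot q^{-\eps L \cdot mn}$. Choosing $L = \lceil 2/\eps \rceil$ makes the exponent $mn(1 - \eps L) \leq -mn$, while $4^L = 4^{O(1/\eps)}$ is a constant once $m,n$ are sufficiently large compared to $1/\eps$; this gives a bound of $q^{-\Theta(mn)}$, as desired.

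There is no real obstacle here: the whole argument is a clean union bound that works precisely because the inclusion events are independent in a random (non-linear) code; the only care required is the routine algebraic check that the ``volume $\times$ inclusion probability'' exponent collapses to $-\eps$ in the rate regime $R = (1-\rho)(1-b\rho) - \eps$, which is exactly what forces the threshold rate to take this form. The same style of argument fails for random linear codes precisely because of the correlations exploited via~\cref{lem:span} in the main theorem, so the interest of this proposition is to witness that no cleverness is needed on the existential side.
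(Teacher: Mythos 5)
Your proposal is correct and follows essentially the same argument as the paper's proof: a union bound over centers and lists, independence of the inclusion events, the volume estimate from~\cref{lem:facts_about_balls}, and the algebraic collapse of the exponent to $-\eps$ per list element. The only (immaterial) difference is bookkeeping with $L$ versus $L+1$ in the size of the offending list.
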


\begin{proof}
	$\mathcal{C}$ is not $(\rho,L)$ list-decodable if there exists a center $Y \in \F_q^{m \times n}$ and a list $\{X_1,\ldots,X_{L+1}\} \subset B(Y,\rho)$ such that $X_i \in \mathcal{C}$ for all $i=1,\ldots,L+1$. Using a union bound, the independence of the events $X_i \in \mathcal{C}$, estimates from~\cref{lem:facts_about_balls} and the definition of $R$, we find:
	\begin{align*}
		\Pr[\exists Y \in \F_q^{m \times n} \text{ and }& \{X_1,\ldots,X_{L+1}\} \subset B(Y,\rho) \text{ s.t. }X_i \in \mathcal{C} ~ \forall i=1,\ldots,L+1]\\
		&\leq \sum_{Y \in \F_q^{m \times n}}\sum_{\{X_1,\ldots,X_{L+1}\} \subset B(Y,\rho)}\prod_{i=1}^{L+1}\Pr[X_i \in \mathcal{C}]\\
		&\leq q^{mn}\binom{4q^{mn(\rho+\rho b-\rho^2 b)}}{L+1} q^{(R-1)mn(L+1)}\\
		&\leq 4^{L+1}\exp_q\pns{mn+mn(\rho+\rho b-\rho^2 b)(L+1) + (-\rho-\rho b+\rho^2b-\eps)(L+1)}\\
		&= 4^{L+1}\exp_q\pns{mn - \eps mn(L+1)} \enspace .
	\end{align*}
	Hence, by setting $L = \Theta(1/\eps)$, assuming $m,n$ are large enough compared to $1/\eps$, the previous expression is $q^{-\Theta(mn)}$, as desired. 
\end{proof}

\paragraph{Random Codes in the Large Radius Regime}

Now we imagine that the decoding radius is tending to 1. In this case, we show that random codes of rate $\Omega(\eps-\eps b + \eps^2b)$ are $(1-\eps,O(\tfrac{1}{\eps-\eps b + \eps^2b}))$ list-decodable. We note that, in the special case of $b=1 \iff n=m$, we see that random codes of rate $\Omega(\eps^2)$ are $(1-\eps,O(1/\eps^2))$ list-decodable with high probability.

\begin{prop} \label{prop:large_radius}
	Let $\eps>0$. Then a random code $\mathcal{C}$ of rate $R = \tfrac{\eps-\eps b+\eps^2b}{2}$ is $(1-\eps,\tfrac{4}{\eps-\eps b + \eps^2 b})$ list-decodable with probability $1-q^{-\Theta(mn)}$ for sufficiently large $m,n$.  
\end{prop}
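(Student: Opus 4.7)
The plan is to run essentially the same first-moment/union-bound argument as in the proof of \cref{prop:random_fixed_radius}, but re-tuned for the regime $\rho = 1-\eps$ where the relevant quantity $\rho + \rho b - \rho^2 b$ is close to $1$. First I would introduce the shorthand $\alpha := \eps - \eps b + \eps^2 b$ and observe that a direct algebraic simplification yields
\[
    \rho + \rho b - \rho^2 b = (1-\eps) + (1-\eps)b(1-(1-\eps)b)^{-1}\cdot\ldots = 1 - \alpha,
\]
so that \cref{lem:facts_about_balls} gives $|B_R(Y, 1-\eps)| \leq 4\, q^{mn(1-\alpha)}$ for every center $Y \in \F_q^{m\times n}$. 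The target list size is $L = \lceil 4/\alpha\rceil$ and the target rate is $R = \alpha/2$.

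Next I would copy the union-bound template of \cref{prop:random_fixed_radius}: the code $\mathcal{C}$ fails to be $(1-\eps,L)$ list-decodable iff there exist a center $Y$ and distinct matrices $X_1,\ldots,X_{L+1} \in B_R(Y,1-\eps)$ all lying in $\mathcal{C}$. Since the events $X_i \in \mathcal{C}$ are mutually independent (each holding with probability $q^{(R-1)mn}$), summing over choices of $Y$ and of the $(L+1)$-subset gives the bound
\begin{align*}
    \Pr[\mathcal{C} \text{ not } (1-\eps,L)\text{-list-decodable}]
    &\leq q^{mn}\binom{4q^{mn(1-\alpha)}}{L+1}q^{(R-1)mn(L+1)}\\
    &\leq 4^{L+1}\exp_q\!\pns{mn + mn(L+1)(R-\alpha)}.
\end{align*}

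Substituting $R = \alpha/2$ makes the exponent equal $mn\pns{1 - (L+1)\alpha/2}$, and the choice $L+1 \geq 4/\alpha$ forces this to be at most $-mn$. Since $4^{L+1}$ is a constant depending only on $\eps$ and $b$ (hence on $\eps$ alone once $b$ is fixed), the overall failure probability is $q^{-\Theta(mn)}$ as claimed, provided $m,n$ are large enough that the constant $4^{L+1}$ does not swamp the exponential savings.

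There is no real obstacle here; the only point requiring a bit of care is keeping track of the dependence of the hidden constants on $\eps$ and $b$, and in particular verifying the algebraic identity $\rho + \rho b - \rho^2 b = 1 - \alpha$ when $\rho = 1-\eps$. Once that identity is in hand, the proof reduces to a direct copy of the computation in \cref{prop:random_fixed_radius}, and no new probabilistic or combinatorial tools (beyond the ball-size estimate in \cref{lem:facts_about_balls}) are needed.
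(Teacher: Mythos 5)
Your proposal is correct and follows essentially the same route as the paper: the identity $\rho+\rho b-\rho^2 b = 1-\alpha$ for $\rho=1-\eps$, the ball-size bound from \cref{lem:facts_about_balls}, and the same union bound over centers and $(L+1)$-subsets with independent membership events, yielding exponent $mn(1-(L+1)\alpha/2)\leq -mn$. The only (immaterial) differences are the garbled intermediate expression in your algebraic identity and the off-by-one in your choice of $L$ (the paper takes $L=\lceil 4/\alpha\rceil-1$ so that the list size is at most $4/\alpha$ as stated).
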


\begin{proof}
	Let $\rho = 1-\eps$ and let $L = \lceil\frac{4}{\eps - \eps b + \eps^2b}\rceil-1$. As before, we bound
	\begin{align*}
		\Pr[\exists Y \in \F_q^{m \times n} \text{ and }& \{X_1,\ldots,X_{L+1}\} \subset B(Y,\rho) \text{ s.t. }X_i \in \mathcal{C} ~ \forall i=1,\ldots,L+1]\\
		&\leq \sum_{Y \in \F_q^{m \times n}}\sum_{\{X_1,\ldots,X_{L+1}\} \subset B(Y,\rho)}\prod_{i=1}^{L+1}\Pr[X_i \in \mathcal{C}]\\
		&\leq q^{mn}\binom{4q^{mn(\rho+\rho b-\rho^2 b)}}{L+1} q^{(R-1)mn(L+1)}\\
		&\leq 4^{L+1} \exp_q\pns{mn + mn(L+1)((1-\eps)+(1-\eps)b-(1-\eps)^2b - \tfrac{\eps-\eps b+\eps^2b}{2})}\\
		&\leq 4^{L+1}\exp_q\pns{mn + mn\tfrac{4}{\eps-\eps b + \eps^2b}\pns{\tfrac{\eps-\eps b + \eps^2b}{2}}}\\
		&= 4^{L+1}\exp_q(-mn) \enspace .
	\end{align*}
	Assuming $m,n$ are large enough compared to $1/\eps$, the previous expression is $q^{-\Theta(mn)}$, as desired. 
\end{proof}

\end{document}